\documentclass[a4paper,11pt]{article}
\usepackage[top=25truemm,bottom=25truemm,left=15truemm,right=15truemm]{geometry}
\usepackage{cite}
\usepackage{amsmath,amssymb,amsfonts}
\usepackage{mathtools}
\usepackage{amsthm}
\usepackage{algorithm}
\usepackage{algpseudocode}
\usepackage{textcomp}
\usepackage{bm}
\usepackage[dvipdfmx]{graphicx}
\usepackage{cite}
\usepackage{here}
\usepackage{comment}
\usepackage{latexsym}
\newcommand{\Hline}[1]{\noalign{\hrule height #1}}
\newtheorem{theorem}{Theorem}

\setcounter{MaxMatrixCols}{30}
\begin{document}
\title{Necessary and Sufficient Conditions for\\Capacity-Achieving Private Information Retrieval\\with Adversarial Servers}
\author{Atsushi Miki, Toshiyasu Matsushima}
\date{}
\maketitle

\begin{abstract}
 Private information retrieval (PIR) is a mechanism for efficiently downloading messages while keeping the index of the desired message secret from the servers.
  PIR schemes have been extended to various scenarios with adversarial servers:
  PIR schemes where some servers are unresponsive or return noisy responses are called robust PIR and Byzantine PIR, respectively;
  PIR schemes where some servers collude to reveal the index are called colluding PIR.
  The information-theoretic upper bound on the download efficiency of these PIR schemes has been proved in previous studies.
  However, systematic ways to construct PIR schemes that achieve the upper bound are not known.
  In order to construct a capacity-achieving PIR schemes systematically, it is necessary to clarify the conditions that the queries should satisfy.
  This paper proves the necessary and sufficient conditions for capacity-achieving PIR schemes. 
\end{abstract}

\section{Introduction}
\label{sec:introduction}
The growing scale of database systems for cloud computing and big data has drawn attention to the need for privacy protection.
Private information retrieval (PIR) is an important mathematical model for protecting privacy.
PIR is a mechanism for retrieving a message from database servers 
through efficient communication while keeping the index of the desired message secret from the servers.
The privacy considered in this paper is information-theoretic, not computational, security.
The simplest technique to preserve privacy is to download the entire library from a single server, but the download efficiency is low.
It is known that utilizing multiple servers with identical functionality and encoding queries can reduce the communication cost.
The communication cost is defined as the entropy of sending a query (hereinafter, upload cost) and receiving a response (hereinafter, download cost).

PIR performance is mainly evaluated by download efficiency (hereinafter, rate).
The rate is defined as the ratio of the desired message length to the download cost.
The information-theoretic upper bound of the rate (hereinafter, capacity) was derived in \cite{7889028}.
PIR capacity is an important metric from the perspective of information theory.
Therefore, previous studies have focused on constructing methods that achieve capacity.
In other words, PIR has three key properties: privacy to protect the user's choice, correctness to ensure the desired message can be decoded, and capacity achievability.

Meanwhile, PIR models have been extended to various scenarios.
Representative extensions include 
robust PIR \cite{8119895,8598994}, where responses from some servers are not received; 
Byzantine PIR \cite{8457293,8613332,8262858}, where some servers return incorrect responses; 
colluding PIR \cite{8119895,8262858,9865995}, where some servers share information.
Collectively, these scenarios are referred to as adversarial PIR.
These extended scenarios are highly compatible with linear PIR, as they utilize the properties of maximum distance separable (MDS) codes, a type of linear code.
Linear PIR refers to schemes in which the servers' responses are constructed as matrix products of the messages and queries.

Previous studies on extended scenarios classified as linear PIR include Sun's method \cite{8119895} and Wang's method \cite{8262858}.
Sun's method is known to be extendable to robust PIR, Byzantine PIR, and colluding PIR, and furthermore achieves capacity in each of these scenarios.
Wang's method can also be extended to similar scenarios, but does not achieve capacity.

Conventionally, there have been no systematic construction methods for capacity-achieving PIR schemes, and even when a PIR method is devised, it is often difficult to verify whether it satisfies the PIR properties: privacy, correctness, and capacity.
This study aims to clarify the necessary and sufficient conditions that queries must satisfy to achieve capacity in linear PIR.
By expressing the capacity conditions in terms of the properties of the query matrix, we can verify whether a devised PIR scheme satisfies the PIR properties.
Furthermore, this provides guidelines for constructing capacity-achieving PIR schemes.

In this paper, we derive the necessary and sufficient conditions 
that query matrices must satisfy to achieve capacity for each scenario of adversarial PIR schemes.
Furthermore, we apply these conditions to conventional methods to verify whether the PIR properties are satisfied.

The remainder of this paper is organized as follows: 
Section \ref{sec:notation} defines the notations used in this paper;
Section \ref{sec:problemstatement} describes the problem setup of adversarial PIR schemes;
Section \ref{sec:representative} introduces representative schemes that are applicable to extended scenarios;
Section \ref{sec:conditions} derives the necessary and sufficient conditions for PIR properties;
Section \ref{sec:confirmation} applies the derived conditions to conventional methods to verify whether PIR properties are satisfied.

\section{Notation}
\label{sec:notation}
In this section, we define the notations used in this paper.

For $i,j \in \mathbb{Z}$, $[i:j] \coloneqq \{i, i+1, \dots , j\}$; if $i > j$, $[i:j]$ is the empty set $\emptyset$. 
For $n,z \in \mathbb{N}$, a vector $\mathcal{V}=(v[1],\dots,v[z])$ and a set $\mathbf{s}=\{s_1,\dots,s_n\} \subset [1:z]$, $\mathcal{V}[{\mathbf{s}}]$ denotes $(v[s_1],\dots,v[s_n])$.
For a $r \times c$ matrix $\mathcal{M}$, its row vectors are denoted by $\mathcal{M}_1,\dots,\mathcal{M}_r$, 
and for sets $\mathbf{t}=\{t_1,\dots,t_i\} \subset [1:r], \mathbf{u}=\{u_1,\dots,u_j\} \subset [1:c]$,
$\mathcal{M}_{\mathbf{t}}$ denotes 
$\begin{pmatrix}
  \mathcal{M}_{t_1}\\
    \vdots \\
  \mathcal{M}_{t_i} 
\end{pmatrix}$, 
and 
$\mathcal{M}_{\mathbf{t}}[\mathbf{u}]$ denotes 
$\begin{pmatrix}
  \mathcal{M}_{t_1}[\mathbf{u}]\\
    \vdots \\
  \mathcal{M}_{t_i}[\mathbf{u}] 
\end{pmatrix}$.
For an integer set $\mathcal{K}=\{i_1,\dots,i_n\}$, the vector $(A_{i_1},\dots , A_{i_n})^\top$ is denoted by $A_{\mathcal{K}}$;
for $n_1, n_2 \in \mathbb{Z}$, $A_{[n_1:n_2]}$ is denoted by $A_{n_1:n_2}$.
For a finite field that has order $p$, we use the notation $\mathbb{F}_p$.

\section{Problem Statement}\label{sec:problemstatement}
We consider $S$ servers such that each server stores $M$ messages $\mathcal{W}_{1:M}$ 
which are random variables satisfying the following conditions: 
\begin{align}
 &M \geq 2,\ S\geq 2,
  \\& H(\mathcal{W}_{1:M}) = H(\mathcal{W}_1) + \dots + H(\mathcal{W}_M),
  \\ & H(\mathcal{W}_1) = \dots = H(\mathcal{W}_M) = L_w.
\end{align}
The flow of the process is shown in Fig. \ref{PIR_flow}.
Let $W_{1:M}$ be the realizations of $\mathcal{W}_{1:M}$.
A user wishes to retrieve a message $W_m\ (m\in [1:M])$
while keeping the selected index $m$ secret from the servers.

For all $i \in [1:S]$, the user sends the query $Q_i^{(m)}$ to the $i$-th server.
Let $\mathcal{Q}_i^{(m)}$ be the random variable corresponding to $Q_i^{(m)}$, and the realization set is defined as $\mathbf{Q}_i = \{Q_i^{(m)} | m \in [1:M]\}$. 
The $i$-th server generates a response $\mathcal{X}_i^{(m)}$ 
using $\mathcal{Q}_i^{(m)}$ and $\mathcal{W}_{1:M}$. 
$\mathcal{X}_i^{(m)}$ is also a random variable because $\mathcal{Q}_i^{(m)}$ and $\mathcal{W}_{1:M}$ are random variables.
Let $X_{i}^{(m)}$ be the realization of $\mathcal{X}_{i}^{(m)}$, and let its realization set be $\mathbf{X}_i = \{X_i^{(m)} | m \in [1:M]\}$.
The user decodes $W_m$ using the responses $X_{1:S}^{(m)}$.
Therefore, the following equation should be satisfied for any $m \in [1:M]$.
\begin{eqnarray}
 [\mathrm{Correctness}] \quad H(\mathcal{W}_m|\mathcal{X}_{1:S}^{(m)},\mathcal{Q}_{1:S}^{(m)})=0. \label{correctness}
\end{eqnarray}
\begin{figure}[tb]
  \begin{center}
  \includegraphics[width=9cm]{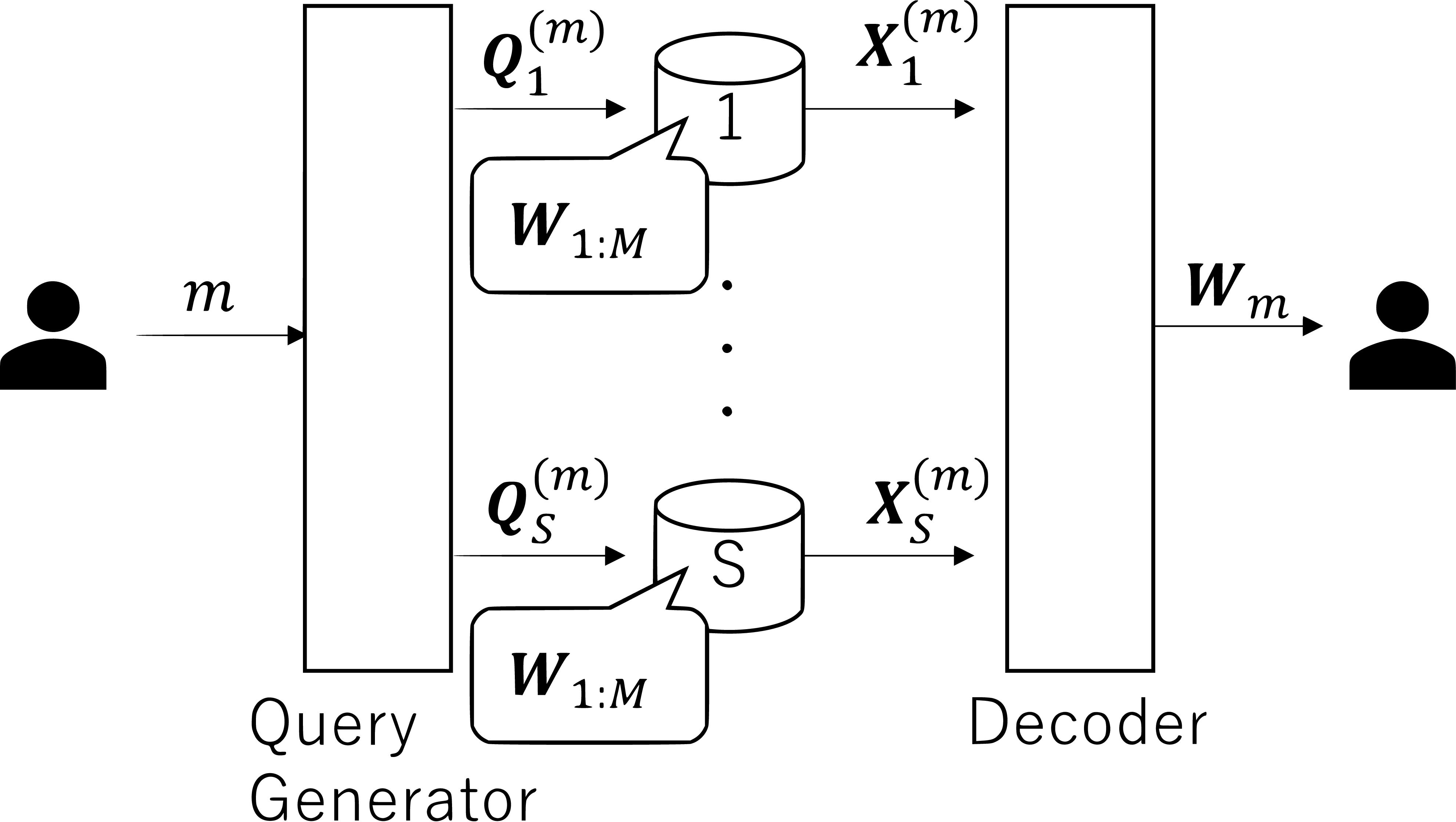}
  \end{center}
  \caption{PIR flow}
  \label{PIR_flow}
\end{figure} 

In colluding PIR, assume $T\ (T\geq 2)$ of the $S$ servers collude to share their information.
Let $\theta$ be a random variable of $m$, which is chosen independently of $\mathcal{W}_{1:M}$ according to a uniform distribution over $[1:M]$, 
and let $\mathcal{T}$ be an index set of $T$ colluding servers, 
then the following equation should be satisfied to ensure privacy for any $\mathcal{T} \subset [1:S]$ satisfying $|\mathcal{T}| = T$
, which means that $\theta$ is not obtained even with all information known by any $T$ colluding servers:
\begin{align}
  [\mathrm{Colluding\ Privacy}] \quad I(\theta;\mathcal{Q}_\mathcal{T}^{(\theta)},\mathcal{W}_{1:M})=0. \label{privacy}
 \end{align}
Let the length of the response from $i$-th server be $L(X_i^{(m)})$, the download rate is defined as follows:
\begin{eqnarray}
 \mathcal{R} = \frac{H(\mathcal{W}_m)}{\sum_{i=1}^{S} \sum_{X_i^{(m)} \in \mathbf{X}_i}\mathrm{Pr}\{\mathcal{X}_i^{(m)}=X_i^{(m)}\}L(X_i^{(m)})}. \label{rate} 
\end{eqnarray}
The capacity of colluding PIR was derived in \cite{8119895} as follows:
 \begin{eqnarray}
  [\mathrm{Colluding\ Capacity}] \quad \mathcal{R} \leq \frac{1-\frac{T}{S}}{1-(\frac{T}{S})^M}. \label{colludingcapacity}
\end{eqnarray} 

For robust PIR, the user uses $S+U$ servers that includes $U\ (U > 0)$ unresponsive servers;
For Byzantine PIR, the user uses $S+2B$ servers that includes $B\ (B > 0)$ noisy servers that return wrong responses; 
these schemes are handled by erasure correction and error correction, respectively.
The privacy properties of these schemes are defined as follows:
\begin{align}
  [\mathrm{Robust\ Privacy}]\quad \forall j\in [1:S+U], I(\theta;\mathcal{Q}_{j}^{(\theta)},\mathcal{W}_{1:M})=0. \label{robustprivacy}
 \end{align}
 \begin{align}
  [\mathrm{Byzantine\ Privacy}]\quad \forall j\in [1:S+2B], I(\theta;\mathcal{Q}_{j}^{(\theta)},\mathcal{W}_{1:M})=0.\label{byzantineprivacy}
 \end{align}
The Capacity achievabilities of these schemes are derived as follows \cite{8119895,8457293}:
 \begin{align}
  &[\mathrm{Robust\ Capacity}] \quad \mathcal{R} \leq \frac{1-\frac{1}{S}}{1-(\frac{1}{S})^M}.
  \\&[\mathrm{Byzantine\ Capacity}] \quad \mathcal{R} \leq \frac{S}{S+2B}\frac{1-\frac{1}{S}}{1-(\frac{1}{S})^M}.\label{byzantinecapacity}
\end{align}

\section{Related Work} \label{sec:representative}
In this section, we introduce representative methods for each adversarial PIR scheme.
\subsection{Sun's Colluding PIR \cite{8119895}} \label{sun'scolludingPIR}
In this method, for all $k\in [1:M]$ 
and for a positive integer $n$, the message is chosen as $W_k \in \mathbb{F}_{S^M}^{nS^M}$\footnote{In \cite{8119895}, the finite field is denoted by $\mathbb{F}_q$, and the order $q$ is defined as $q \geq S^M$. In this paper, we set $q=S^M$ to reduce $\mathcal{U}$.}. 
Therefore, $L_w$ is a multiple of $S^M\log S^M$, and $S$ must be a prime power. 
For all $i\in [1:S]$, the query is chosen as $Q_i^{(m)} \in \mathbb{F}_{S^M}^{\frac{S^M-T}{S-T} \times MS^M}$.
We present an example for $n=1$.

In the query generation phase, for all $i\in [1:S],\ j\in [1:M],\ k\in[1:M]$, a user generates $\mathcal{M}_i^{(m)}$ as follows:
\begin{align}
  &\mathcal{M}_i^{(m)}=(\mathcal{M}_{i,1}^{(m)\top},\ \dots,\ \mathcal{M}_{i,M}^{(m)\top})^\top, \notag\\
  &\mathcal{M}_{i,j}^{(m)} = (\mathcal{M}_{i,j}^{(m)}[1],\ \dots,\ \mathcal{M}_{i,j}^{(m)}[M]),\notag \\ 
  &\mathcal{M}_{i,j}^{(m)}[k]= (v_{i,j,k}^{(m)}[1]^{\top},\ \dots,\ v_{i,j,k}^{(m)}[\begin{pmatrix}M\\j\end{pmatrix}T^{M-j}(S-T)^{j-1}]^{\top})^\top.\label{sun}
\end{align}
For all $l \in \left[1:\begin{pmatrix}M\\j\end{pmatrix}T^{M-j}(S-T)^{j-1}\right]$, $v_{i,j,k}^{(m)}[l]$ is an $S^M$ length vector.
Furthermore, for all $k\in[1:M]$, the user generates 
$\mathbf{s}_k$, which are $S^M \times S^M$ full-rank random matrices.
The user generates $\mathbf{e}_m=\mathbf{s}_m$,
and for all $k\in[1:M]\setminus \{m\}$, the user generates $\mathbf{e}_k$ as follows:
\begin{align}
  \mathbf{e}_k = \mathcal{G}\left(S^{M},TS^{M-1}\right)\mathbf{s}_k[1:TS^{M-1}]^\top,
\end{align}
where $\mathcal{G}(n,k)$ is an $n \times k$ encoding matrix formed by combining generator matrices of MDS codes on a sufficient large finite field $\mathbb{F}_q$.
For all $k\in[1:M]\setminus \{m\}$, $\mathbf{s}_k[1:TS^{M-1}]^\top$ is the first $TS^{M-1}$ rows of $\mathbf{s}_k$. 
Therefore, $\mathbf{e}_k$ is an $S^M \times S^M$ matrix.
The queries are generated as follows:
\begin{align}
  \begin{pmatrix}Q_1^{(m)}\\Q_2^{(m)}\\\vdots \\ Q_S^{(m)}\end{pmatrix} 
  = \begin{pmatrix}\mathcal{M}_1^{(m)}\\\mathcal{M}_2^{(m)}\\\vdots \\ \mathcal{M}_S^{(m)}\end{pmatrix} 
  \begin{pmatrix} 
    \mathbf{e}_1 \quad \quad         &              &                             \\
        \quad \ddots \quad           &              &     \mathbf{O}              \\
        \quad \quad \mathbf{e}_{m-1} &              &                             \\
                                     &\mathbf{e}_m  &                             \\
                                     &              &\mathbf{e}_{m+1} \quad \quad \\
           \mathbf{O}                &              &           \quad \ddots \quad\\ 
                                     &              &           \quad \quad  \mathbf{e}_M 
  \end{pmatrix}. \label{sunsquery}
\end{align}
$\mathcal{M}_1^{(m)},\ \dots,\ \mathcal{M}_S^{(m)}$ 
in (\ref{sun}) are constructed such that there exists a matrix $D_m$ satisfying the following equation:
\begin{align}
  D_m \begin{pmatrix}Q_1^{(m)}\\\vdots \\ Q_S^{(m)}\end{pmatrix} = 
  \begin{pmatrix}
    \mathbf{O}_{(m-1)S^M \times S^M} \\
    \mathbf{E}_{S^M} \\
    \mathbf{O}_{(M-m)S^M \times S^M}
   \end{pmatrix}^\top,\label{sundecode}
\end{align}
where $\mathbf{O}_{l_1 \times l_2}$ denotes the $l_1 \times l_2$ zero matrix and 
$\mathbf{E}_{l}$ denotes the $l \times l$ identity matrix.

In the response generation phase, for all $k\in [1:M]$, each server divides the messages into $S^M$ equal-length parts as follows:
\begin{equation}
  W_k=
(w_{k,1} ,\ \dots ,\ w_{k,S^M})^\top.
\end{equation}
Next, for all $i\in [1:S],\ j\in [1:M],\ k\in[1:M]$, the $i$-th server generates a response as follows:
\begin{align}
  &X_i^{(m)}=(X_{i,1}^{(m)\top},\ \dots,\ X_{i,M}^{(m)\top})^\top,\\
  &X_{i,j}^{(m)}=\bigl(\sum_{k=1}^{M} v_{i,j,k}^{(m)}[1]\mathbf{e}_k\begin{pmatrix}w_{k,1}\\\vdots\\w_{k,S^M}\end{pmatrix},&\dots,\ \sum_{k=1}^M v_{i,j,k}^{(m)}[\begin{pmatrix}M\\j\end{pmatrix}T^{M-j}(S-T)^{j-1}]\mathbf{e}_k\begin{pmatrix}w_{k,1}\\\vdots\\w_{k,S^M}\end{pmatrix}\bigr)^\top.
\end{align}

In the decoding phase, the user decodes the desired message $W_m$ as follows:
\begin{align}
  D_m \begin{pmatrix}X_1^{(m)}\\\vdots \\ X_S^{(m)}\end{pmatrix} = \begin{pmatrix}w_{m,1}\\\vdots\\w_{m,S^M}\end{pmatrix} = W_m.
\end{align}
Because $D_m$ satisfies (\ref{sundecode}), correctness is guaranteed.

\subsection{Sun's Robust PIR \cite{8119895}}
In the scheme described in \ref{sun'scolludingPIR}, 
the order of the finite field is changed to a prime power $q \geq (S+U)S^{M-1}$,
and the user generates $\mathbf{e}_m=\mathcal{C} \mathbf{s}_m$, where $\mathcal{C}$ is an $((S+U)S^{M-1} , S^M)$ generator matrix of an MDS code;
further, for all $k\in[1:M]\setminus \{m\}$, the user generates $\mathbf{e}_k$ as follows:
\begin{align}
  \mathbf{e}_k = \mathcal{G}\left((S+U)S^{M-1},TS^{M-1}\right)\mathbf{s}_k[1:TS^{M-1}]^\top.
\end{align}
In this case, the user sends queries to $S+U$ servers, and the responses from any $S$ servers are used for decoding.
The user can decode $W_m$ by erasure correction, because the responses are MDS-encoded.
By setting $T=1$, this method is equivalent to non-colluding robust PIR.

\subsection{Banawan's Byzantine PIR \cite{8457293}}
This method has a same mechanism as Sun's robust PIR\footnote{Hereinafter, we regard this method as equivalent to Sun's robust PIR.}.
In the scheme described in \ref{sun'scolludingPIR}, 
the order of the finite field is changed to a prime power $q \geq (S+2B)S^{M-1}$,
and the user generates $\mathbf{e}_m=\mathcal{C} \mathbf{s}_m$, where $\mathcal{C}$ is an $((S+2B)S^{M-1} , S^M)$ generator matrix of an MDS code;
further, for all $k\in[1:M]\setminus \{m\}$, the user generates $\mathbf{e}_k$ as follows:
\begin{align}
  \mathbf{e}_k = \mathcal{G}\left((S+2B)S^{M-1},TS^{M-1}\right)\mathbf{s}_k[1:TS^{M-1}]^\top.
\end{align}
In this case, the user sends queries to $S+2B$ servers.
The user can decode $W_m$ by error correction, because the responses are MDS-encoded.
By setting $T=1$, this method is equivalent to non-colluding Byzantine PIR.

\subsection{Wang's Colluding PIR \cite{8262858}} \label{wan'scolludingPIR}
In this method\footnote{This method can be applied to symmetric PIR, and eavesdropper scenario. However, in this paper, we set the common randomness to zero and assume no eavesdroppers.}, for all $k\in [1:M]$, and for a positive integer $n$, the message is chosen as $W_k \in \mathbb{F}_S^{n(S-T)}$. 
Therefore, $L_w$ is a multiple of $(S-T)\log S$, and $S$ must be a prime power.  
For all $i\in [1:S]$, the query is chosen as $Q_i^{(m)} \in \mathbb{F}_S^{M(S-T)}$.
We present an example for $n=1$.

In the query generation phase, 
a user generates a matrix $\mathcal{M}$ as follows:
\begin{align}
&\mathcal{M}=
\begin{pmatrix}
  \mathbf{r}[1]\quad\quad \dots \quad\quad  & \mathbf{r}[m] & \quad\quad \dots \quad\quad \mathbf{r}[M]\\
   \mathbf{O}_{(S-T)\times (m-1)(S-T)} & \mathbf{E}_{S-T} & \mathbf{O}_{(S-T) \times (M-m)(S-T)}
\end{pmatrix},
\end{align}
where $\mathbf{r}[1],\dots,\mathbf{r}[M]$ are $T \times (S-T)$ matrices.
For all $k \in [1:M]$, the factors of $\mathbf{r}[k]$ are chosen randomly from $\mathbb{F}_{S}$.
Next, for all $j \in [1:S]$, 
the user generates the query for $j$-th server as follows:
\begin{align}
Q_j^{(m)}=
\begin{pmatrix}
  1 & z_j & z_j^2 & \dots & z_j^{S-1} 
\end{pmatrix} 
\mathcal{M}. \label{wangquery}
\end{align}
$z_j$ are chosen from $\mathbb{F}_{S}$ to be different from each other.

In the response generation phase, 
for all $k\in [1:M]$, each server divides the messages into $S-T$ equal-length parts as follows:
\begin{equation}
  W_k=
(w_{k,1} ,\ \dots ,\ w_{k,S-T})^\top.
\end{equation}
The responses from $S$ servers are expressed as follows:
\begin{align} 
  &\begin{pmatrix}
    X_{1}^{(m)}\\
    X_{2}^{(m)}\\ 
    \vdots \\
    X_{S}^{(m)}    
  \end{pmatrix}
= 
\underbrace{
\begin{pmatrix}
  1 & z_1 & z_1^2 & \dots & z_1^{S-1}\\
  1 & z_2 & z_2^2 & \dots & z_2^{S-1}\\
  \vdots & \vdots & \vdots & \ddots & \vdots \\
  1 & z_S & z_S^2 & \dots & z_S^{S-1}\\    
\end{pmatrix} 
\mathcal{M}}_{Q_{1:S}^{(m)}}
\begin{pmatrix}
  W_{1}\\
  W_{2}\\
  \vdots\\
  W_{M}
\end{pmatrix}. \label{wangresponse}
\end{align}
Let $(\beta_{1},\ \dots,\ \beta_{T})^\top=\sum_{k\in[1:M]} \mathbf{r}[k] W_k$,
then the responses from the servers are transformed as follows:
\begin{align}
  \begin{pmatrix}
    X_{1}^{(m)}\\
    X_{2}^{(m)}\\ 
    \vdots \\
    X_{S}^{(m)}    
  \end{pmatrix}
= 
\begin{pmatrix}
  1 & z_1 & z_1^2 & \dots & z_1^{S-1}\\
  1 & z_2 & z_2^2 & \dots & z_2^{S-1}\\
  \vdots & \vdots & \vdots & \ddots & \vdots \\
  1 & z_S & z_S^2 & \dots & z_S^{S-1}\\    
\end{pmatrix} 
\begin{pmatrix}
  \beta_1\\
  \vdots\\
  \beta_T\\
  w_{m,1}\\
  \vdots\\
  w_{m,S-T}
\end{pmatrix}.\label{ss}
\end{align}
Since there are $T$ random numbers $\beta_1,\dots,\beta_T$, recovery is not possible with $T$ responses.
This forms the basis of the privacy mechanism.

In the decoding phase, 
since the Vandermonde matrix is a regular matrix,
the inverse matrix exists,
and the user decodes $W_m$ as follows:
\begin{align}
  \begin{pmatrix}  
    \beta_1\\
    \vdots\\
    \beta_T\\
    w_{m,1}\\
    \vdots \\ 
    w_{m,S-T}
  \end{pmatrix} = 
  \begin{pmatrix}
    1 & z_1 & z_1^2 & \dots & z_1^{S-1}\\
    1 & z_2 & z_2^2 & \dots & z_2^{S-1}\\
    \vdots & \vdots & \vdots & \ddots & \vdots \\
    1 & z_S & z_S^2 & \dots & z_S^{S-1}\\    
  \end{pmatrix} ^{-1}
  \begin{pmatrix}X_1^{(m)}\\X_2^{(m)}\\\vdots \\ X_S^{(m)}\end{pmatrix}.\label{wangdecode}
\end{align}
By obtaining the last $S-T$ symbols of (\ref{wangdecode}), the user retrieves $W_m = (w_{m,1}, \dots, w_{m,S-T})^\top$.

\subsection{Wang's Robust and Byzantine PIR \cite{8262858}}
In the scheme described in \ref{wan'scolludingPIR}, 
the order of the finite field is changed to a prime power $q \geq S+\mu$,
and the user uses $S+\mu$ servers, where $\mu=U$ for robust PIR and $\mu=2B$ for Byzantine PIR.
The responses of (\ref{wangresponse}) are changed to the following, and the responses can be regarded as MDS-encoded symbols:
\begin{align} 
  &\begin{pmatrix}
    X_{1}^{(m)}\\
    X_{2}^{(m)}\\ 
    \vdots \\
    X_{S+\mu}^{(m)}    
  \end{pmatrix}
= 
\underbrace{
\begin{pmatrix}
  1 & z_1 & z_1^2 & \dots & z_1^{S-1}\\
  1 & z_2 & z_2^2 & \dots & z_2^{S-1}\\
  \vdots & \vdots & \vdots & \ddots & \vdots \\
  1 & z_{S+\mu} & z_{S+\mu}^2 & \dots & z_{S+\mu}^{S-1}\\    
\end{pmatrix} 
\mathcal{M}}_{Q_{1:S+\mu}^{(m)}}
\begin{pmatrix}
  W_{1}\\
  W_{2}\\
  \vdots\\
  W_{M}
\end{pmatrix}. 
\end{align}
The user decodes $W_m$ by erasure correction for robust PIR and error correction for Byzantine PIR.
By setting $T=1$, this method is equivalent to non-colluding robust and Byzantine PIR.

\section{Conditions for PIR properties}\label{sec:conditions}
In this section, we derive the conditions for correctness and privacy of colluding, robust, and Byzantine PIR schemes.

 \subsection{Conditions for Correctness}
 For any $j \in [1:S]$, let the query to the $j$-th server be $Q_j^{(m)}=(Q_j^{(m)}[1],\dots,Q_j^{(m)}[M])$,
 the responses $X_1^{(m)},\dots,X_S^{(m)}$ are as follows:
 \begin{align}
 \begin{pmatrix}
 X_1^{(m)} \\ \vdots \\X_S^{(m)} 
 \end{pmatrix}
 =
 \begin{pmatrix}
   Q_1^{(m)}[1] & \dots & Q_1^{(m)}[M]  \\
  \vdots & \ddots & \vdots \\
   Q_S^{(m)}[1] & \dots & Q_S^{(m)}[M]
 \end{pmatrix}
 \begin{pmatrix}
   W_1 \\ \vdots \\ W_M 
 \end{pmatrix}.
 \end{align}
 Using a decoding matrix $D_m$, the decoding scheme is as follows:
 \begin{align}
 D_m
 \begin{pmatrix}
     X_1^{(m)} \\ \vdots \\X_S^{(m)} 
 \end{pmatrix}
 =
 W_m  .
 \end{align}
 Therefore, the condition for correctness is expressed as following theorem:
 \begin{theorem}[Condition for Correctness] \label{conditionsforcorrectness}
 there exists a matrix $D_m$ and the following equation should be satisfied.
 \begin{align}
    D_m \begin{pmatrix}Q_1^{(m)}\\\vdots \\ Q_S^{(m)}\end{pmatrix} = 
    \begin{pmatrix}
      \mathbf{O}_{(m-1)L_w \times L_w}\\
      \mathbf{E}\\
      \mathbf{O}_{(M-m)L_w \times L_w}
     \end{pmatrix}^{\top} \label{dec}.
 \end{align}
 The $\mathbf{O}_{l_1 \times l_2}$ denotes zero matrix of size $l_1 \times l_2$ and 
 the $\mathbf{E}$ denotes identity matrix of size $L_w \times L_w$.
\end{theorem}

\subsection{Conditions for Robust and Byzantine Privacy}
Let $F$ be a random factor of the query which is uniformly distributed over a certain finite set $\mathcal{F}$.
Assume that a query set $Q_{1:S}^{(m)}$ is constructed by a bijective function $\psi(m,F)$.
Furthermore, assume that the entropy of the entire query is constant regardless of the desired index $m$.

For all $j \in [1:S+U]$ in robust PIR or $j \in [1:S+2B]$ in Byzantine PIR, let $q_j$ be the query realization that the $j$-th server receives,
then the following lemma holds:
\newtheorem{lemma}{Lemma}
\begin{lemma}[Sufficient Condition for Robust and Byzantine Privacy] \label{sufficientlemma}
  For any $m,m^{\prime} \in [1:M]$, 
  $q_j \in \mathbf{Q}_j$, 
  the sufficient condition for \eqref{robustprivacy} and \eqref{byzantineprivacy} is as follows:
  \begin{align}
     \mathrm{Pr}\{\mathcal{Q}_j^{(m)}=q_j\}=\mathrm{Pr}\{\mathcal{Q}_j^{(m^{\prime})}=q_j\}\label{sufficient1}.
 \end{align}
\end{lemma}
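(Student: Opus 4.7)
\textbf{Proof plan for Lemma \ref{sufficientlemma}.}
My plan is to show that condition \eqref{sufficient1} forces $\theta$ to be independent of the pair $(\mathcal{Q}_j^{(\theta)}, \mathcal{W}_{1:M})$, which is precisely the assertion $I(\theta; \mathcal{Q}_j^{(\theta)}, \mathcal{W}_{1:M}) = 0$. I will attack this by expanding the joint mutual information via the chain rule and then handling the two terms separately: one vanishes because $\theta$ is independent of the messages by assumption, and the other is reduced to a statement about the marginal distribution of the single-server query.

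First I would write
\begin{align}
I(\theta; \mathcal{Q}_j^{(\theta)}, \mathcal{W}_{1:M})
= I(\theta; \mathcal{W}_{1:M}) + I(\theta; \mathcal{Q}_j^{(\theta)} \mid \mathcal{W}_{1:M}).
\end{align}
The first term is zero because the problem setup stipulates that $\theta$ is chosen independently of $\mathcal{W}_{1:M}$. For the second term, I would use the fact that the query set $Q_{1:S}^{(m)}$ is produced by the bijection $\psi(m,F)$ with $F$ uniform on $\mathcal{F}$ and generated at the user side before any interaction with the servers; therefore $F$ is independent of $\mathcal{W}_{1:M}$ (and of $\theta$). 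Hence $\mathcal{Q}_j^{(\theta)}$, being a deterministic function of $(\theta, F)$, is jointly independent of $\mathcal{W}_{1:M}$ together with $\theta$, which lets me drop the conditioning to obtain $I(\theta; \mathcal{Q}_j^{(\theta)} \mid \mathcal{W}_{1:M}) = I(\theta; \mathcal{Q}_j^{(\theta)})$.

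It then remains to show $I(\theta; \mathcal{Q}_j^{(\theta)}) = 0$, i.e.\ that $\mathcal{Q}_j^{(\theta)}$ and $\theta$ are independent. By the law of total probability and the uniformity of $\theta$ over $[1:M]$,
\begin{align}
\mathrm{Pr}\{\mathcal{Q}_j^{(\theta)} = q_j\}
= \sum_{m=1}^{M} \mathrm{Pr}\{\theta = m\}\,\mathrm{Pr}\{\mathcal{Q}_j^{(m)} = q_j\}
= \frac{1}{M}\sum_{m=1}^{M} \mathrm{Pr}\{\mathcal{Q}_j^{(m)} = q_j\}.
\end{align}
Assumption \eqref{sufficient1} says the summand is the same for every $m$, so the right-hand side equals $\mathrm{Pr}\{\mathcal{Q}_j^{(m)} = q_j\}$ for any $m$, which coincides with the conditional probability $\mathrm{Pr}\{\mathcal{Q}_j^{(\theta)} = q_j \mid \theta = m\}$. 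Hence $\theta \perp \mathcal{Q}_j^{(\theta)}$, and combining this with the previous step yields the required privacy equation for arbitrary $j \in [1:S+U]$ or $j \in [1:S+2B]$.

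The only non-bookkeeping step is the reduction $I(\theta; \mathcal{Q}_j^{(\theta)} \mid \mathcal{W}_{1:M}) = I(\theta; \mathcal{Q}_j^{(\theta)})$, which I expect to be the main point to argue carefully: it relies on the standing assumptions that the user's index $\theta$ and query randomness $F$ are drawn independently of the database $\mathcal{W}_{1:M}$, and that the query is generated solely from $(\theta, F)$ via $\psi$. Once that is in place, the rest is a routine application of the chain rule and the law of total probability.
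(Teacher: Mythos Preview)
Your argument is correct and follows essentially the same line as the paper's own proof: the paper simply asserts that \eqref{sufficient1} makes $\mathcal{Q}_j^{(\theta)}$ independent of $\theta$ and then invokes the independence of $\theta$ from $\mathcal{W}_{1:M}$ to conclude $I(\theta;\mathcal{Q}_j^{(\theta)},\mathcal{W}_{1:M})=0$. Your version is a more fleshed-out execution of that sketch, making explicit the chain-rule split, the law-of-total-probability computation, and the justification for dropping the conditioning on $\mathcal{W}_{1:M}$ via the assumption that the query is a deterministic function of $(\theta,F)$ with $(\theta,F)$ independent of the messages.
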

\begin{proof}  
  Equation \eqref{sufficient1} indicates $\mathcal{Q}_j^{(\theta)}$ and $\theta$ are independent of each other,
  Therefore, from the assumption that the user selects $\theta$ independently of $\mathcal{W}_{1:S}$, 
  the following equation holds, and \eqref{robustprivacy} and \eqref{byzantineprivacy} are satisfied:
  \begin{align}
    I(\theta;\mathcal{Q}_j^{(\theta)}, \mathcal{W}_{1:S}) = 0.
  \end{align}
\end{proof}
\begin{lemma} [Uniqueness of the Desired Index]
  \label{uniq}
The following equation holds:
\begin{align}
   H(\theta | \mathcal{Q}_{1:S}^{(\theta)}) = 0.
\end{align}
\end{lemma}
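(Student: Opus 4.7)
The plan is to exploit the bijectivity hypothesis on $\psi$ stated in the preamble to Lemma~\ref{sufficientlemma}: the user constructs the full query tuple $Q_{1:S}^{(m)}$ as $\psi(m,F)$, and $\psi$ is assumed bijective in the pair $(m,F)$. Bijectivity immediately yields an inverse map $\psi^{-1}$ that recovers $(m,F)$, and in particular $m$, from $Q_{1:S}^{(m)}$.

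The first step is to formalize this observation at the level of random variables. Letting $\pi_1$ denote projection onto the first coordinate, I would write $\theta = \pi_1\bigl(\psi^{-1}(\mathcal{Q}_{1:S}^{(\theta)})\bigr)$ with probability one. Since $\theta$ is thereby expressed as a deterministic (measurable) function of the full query tuple $\mathcal{Q}_{1:S}^{(\theta)}$, the standard information-theoretic identity $H(X\mid Y)=0$ when $X$ is a function of $Y$ gives the claim $H(\theta\mid\mathcal{Q}_{1:S}^{(\theta)})=0$.

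I do not anticipate a genuine obstacle; the proof is essentially a one-line consequence of the bijectivity assumption combined with a basic entropy identity. The only subtlety is conceptual rather than technical: although privacy prohibits any $T$ colluding servers from learning $\theta$ from their own share of queries, the \textbf{entire} collection of $S$ queries — which no admissible adversary ever observes in full — does pin down $\theta$ uniquely. This asymmetry is precisely what makes the lemma a useful building block for converting the privacy requirement on subsets into constraints on the joint query distribution in the subsequent necessary-condition results.
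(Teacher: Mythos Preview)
Your proposal is correct and essentially mirrors the paper's own argument: both rely on the bijectivity of $\psi$ to conclude that the full query tuple $\mathcal{Q}_{1:S}^{(\theta)}$ determines $\theta$. The only cosmetic difference is that the paper expands the conditional entropy as an explicit sum and observes $p(m\mid q_{1:S},f)=1$ on the support (using the deterministic relation $q_{1:S}=\psi(m,f)$), whereas you invoke the inverse map $\psi^{-1}$ and the standard identity $H(X\mid Y)=0$ for $X$ a function of $Y$; the substance is identical.
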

\begin{proof}
  Let $p(q_{1:S},m,f)=\mathrm{Pr}\{\mathcal{Q}_{1:S}^{(\theta)}=q_{1:S}, \theta=m, F=f\}$ be a probability mass function of the queries,
  then the following equation holds:
  \begin{align}
  p(q_{1:S}|m, f) = 
  \begin{cases}
    1 \quad \mathrm{if}\ q_{1:S}=\psi(m,f)\\
    0 \quad \mathrm{otherwise} \ .
  \end{cases}\label{lem1}
    \end{align}
 Therefore, the following equations hold by using (\ref{lem1}):
\begin{align}
  & H(\theta | \mathcal{Q}_{1:S}^{(\theta)}) 
  \\ & = -\sum_{q_{1:S} \in \mathcal{Q}} \sum_{m \in [1:M]}  \sum_{f\in \mathcal{F}}p(q_{1:S},m,f)  \log p(m | q_{1:S}, f) \notag
   \\&= 0.\notag
\end{align}
\end{proof}

Furthermore, due to the assumptions that $F$ is uniformly distributed over $\mathcal{F}$, 
and $\mathcal{Q}_{1:S}^{(\theta)}$ is also chosen uniformly from $\mathbf{Q}_{1:S}$, 
then the following lemma holds for all $j \in [1:S+U]$ in robust PIR or $j \in [1:S+2B]$ in Byzantine PIR:

\begin{lemma} [Entropy of the Query for the Unresponsive or Byzantine Servers]
  \label{asm}
    For any $m,m^{\prime} \in [1:M]$, 
  $q_j \in \mathbf{Q}_j$, the following equation holds:
  \begin{align}
  H(\mathcal{Q}^{(m)}_{1:S}|\mathcal{Q}^{(m)}_{j}=q_{j}) = H(\mathcal{Q}^{(m^{\prime})}_{1:S}|\mathcal{Q}^{(m^{\prime})}_{j}=q_{j}). 
  \end{align}
\end{lemma}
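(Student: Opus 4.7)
The plan is to reduce the claimed entropy equality to a combinatorial counting identity and then verify the identity using the two uniformity hypotheses together with Lemma~\ref{uniq}.

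First, I would rewrite both conditional entropies as logarithms of preimage sizes. Since $F$ is uniform on $\mathcal{F}$ and $\psi(m,\cdot)$ is bijective onto its image $\mathbf{Q}_{1:S}^{(m)} := \psi(m,\mathcal{F})$, the variable $\mathcal{Q}_{1:S}^{(m)} = \psi(m,F)$ is uniform on $\mathbf{Q}_{1:S}^{(m)}$ with cardinality $|\mathcal{F}|$. Conditioning on $\mathcal{Q}_j^{(m)} = q_j$ then restricts this to the uniform distribution on
\[
N_m(q_j) := \{\, q_{1:S}\in \mathbf{Q}_{1:S}^{(m)} : (q_{1:S})_j = q_j \,\},
\]
so that $H(\mathcal{Q}_{1:S}^{(m)} \mid \mathcal{Q}_j^{(m)} = q_j) = \log|N_m(q_j)|$. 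The claim therefore reduces to the counting identity $|N_m(q_j)| = |N_{m'}(q_j)|$ for every $m,m' \in [1:M]$ and every $q_j \in \mathbf{Q}_j$.

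Next I would exploit Lemma~\ref{uniq} to show that the images $\{\mathbf{Q}_{1:S}^{(m)}\}_{m\in[1:M]}$ partition $\mathbf{Q}_{1:S}$: Lemma~\ref{uniq} says $\theta$ is a function of $\mathcal{Q}_{1:S}^{(\theta)}$, so each $q_{1:S}\in \mathbf{Q}_{1:S}$ lies in exactly one $\mathbf{Q}_{1:S}^{(m)}$. Combined with $|\mathbf{Q}_{1:S}^{(m)}| = |\mathcal{F}|$, the assumed uniformity $\mathcal{Q}_{1:S}^{(\theta)} \sim \mathrm{Unif}(\mathbf{Q}_{1:S})$ is then consistent with the mixture $\frac{1}{M}\sum_m \mathrm{Unif}(\mathbf{Q}_{1:S}^{(m)})$. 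Projecting this uniform law onto the $j$-th coordinate yields a statement about the marginal $\mathcal{Q}_j^{(\theta)}$, which I would pair with the per-index uniformity of $\mathcal{Q}_{1:S}^{(m)}$ on $\mathbf{Q}_{1:S}^{(m)}$ to draw out the pointwise count equality $|N_m(q_j)| = |N_{m'}(q_j)|$.

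The main obstacle lies precisely in this final step: joint uniformity of $\mathcal{Q}_{1:S}^{(\theta)}$ on the partitioned support $\mathbf{Q}_{1:S}$ gives only the \emph{sum} identity $\sum_m |N_m(q_j)| = M|\mathcal{F}|/|\mathbf{Q}_j|$, whereas the lemma requires the \emph{pointwise} identity across $m$. Bridging this gap is essentially equivalent to the marginal-uniformity condition \eqref{sufficient1} of Lemma~\ref{sufficientlemma}. I expect to close it by sharpening the interpretation of the uniformity hypothesis to the per-server marginal, so that $\Pr\{\mathcal{Q}_j^{(m)} = q_j\}$ is seen to be constant in $m$, or equivalently by invoking the structural symmetry that every admissible query $q_j \in \mathbf{Q}_j$ is produced by the same number of random-factor choices for every desired index $m$; from there the identity $|N_m(q_j)| = |\mathcal{F}|\cdot \Pr\{\mathcal{Q}_j^{(m)} = q_j\}$ completes the argument.
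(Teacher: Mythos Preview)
Your counting reduction is correct and in fact tighter than the paper's argument. The paper's proof does not attempt to derive the per-server marginal equality from the joint-uniformity hypotheses; it simply opens with $\Pr\{\mathcal{Q}_j^{(\theta)}=q_j \mid \theta=m\} = \Pr\{\mathcal{Q}_j^{(\theta)}=q_j \mid \theta=m'\}$ as a standing assumption and then uses the chain rule together with the constancy of $H(\mathcal{Q}_{1:S}^{(m)})$ in $m$ to equate the conditional entropies. So the ``obstacle'' you flagged is real, and your second proposed resolution---taking the structural symmetry (i.e.\ condition~\eqref{sufficient1}) as given---is exactly what the paper does; your first option, extracting it from joint uniformity of $\mathcal{Q}_{1:S}^{(\theta)}$ alone, would not go through. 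This is harmless in context because Lemma~\ref{asm} is only invoked inside the equivalence chain of Theorem~\ref{privtheorem}, where~\eqref{sufficient1} is already on the table. One technical advantage of your route: the paper's chain-rule step only directly yields equality of the \emph{averaged} conditional entropies $H(\mathcal{Q}_{1:S}^{(m)}\mid\mathcal{Q}_j^{(m)})$, and getting to the pointwise version $H(\mathcal{Q}_{1:S}^{(m)}\mid\mathcal{Q}_j^{(m)}=q_j)$ tacitly uses exactly the uniformity you spelled out; your identity $|N_m(q_j)| = |\mathcal{F}|\cdot\Pr\{\mathcal{Q}_j^{(m)}=q_j\}$ delivers the pointwise statement in one step.
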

\begin{proof}
  \begin{align}
      &\mathrm{Pr}\{\mathcal{Q}_{j}^{(\theta)}=q_{j} | \theta=m\} = \mathrm{Pr}\{\mathcal{Q}_{j}^{(\theta)}=q_{j} | \theta=m^{\prime}\}
      \\& \Rightarrow  H(\mathcal{Q}_{j}^{(\theta)} | \theta=m) = H(\mathcal{Q}_{j}^{(\theta)} | \theta=m^{\prime})
      \\& \Leftrightarrow  H(\mathcal{Q}^{(\theta)}_{1:S}|\mathcal{Q}^{(\theta)}_{j}=q_{j}, \theta=m)
       = H(\mathcal{Q}^{(\theta)}_{1:S}|\mathcal{Q}^{(\theta)}_{j}=q_{j}, \theta=m^{\prime}) \label{uni}
      \\& \Leftrightarrow  H(\mathcal{Q}^{(m)}_{1:S}|\mathcal{Q}^{(m)}_{j}=q_{j}) = H(\mathcal{Q}^{(m^{\prime})}_{1:S}|\mathcal{Q}^{(m^{\prime})}_{j}=q_{j}). 
    \end{align}    
   Equation (\ref{uni}) is due to 
   the assumption that the entropy of the entire query is constant regardless of the desired index.
   \end{proof}  
Let $\bar{j} = [1:S+U] \setminus \{j\}$ for robust PIR and $\bar{j} = [1:S+2B] \setminus \{j\}$ for Byzantine PIR, 
then the following theorem holds by using Lemma \ref{uniq} and Lemma \ref{asm}:
\begin{theorem} [Conditions for Robust and Byzantine Privacy] \label{privtheorem}
  Let $f\in \mathcal{F}$ be a realization of the random factor $F$.
  For any $m,m^{\prime} \in [1:M]$,
  $q_j \in \mathbf{Q}_j$, the following equations hold:
  \begin{align}
    &\mathrm{Pr}\{\mathcal{Q}_j^{(m)}=q_j\}=\mathrm{Pr}\{\mathcal{Q}_j^{(m^{\prime})}=q_j\} \label{probpriv}
    \\&\Leftrightarrow |\{f | \mathcal{Q}_j^{(m)}=q_j\}|=|\{f |\mathcal{Q}_j^{(m^\prime)}=q_j\}|\label{theoremrobust}. 
  \end{align}
  \end{theorem}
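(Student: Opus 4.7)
The task is to show the equivalence of the probabilistic condition \eqref{probpriv} and the combinatorial condition \eqref{theoremrobust}. My plan is to exploit the two standing assumptions already declared before Lemma \ref{sufficientlemma}: that $F$ is uniformly distributed over the finite set $\mathcal{F}$, and that $\psi(m,F)$ is a bijective description of the query tuple $Q_{1:S}^{(m)}$. Together these reduce every marginal query probability to a normalized count of preimages in $\mathcal{F}$, after which the equivalence becomes an arithmetic triviality.

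Concretely, I would fix arbitrary $m,m^{\prime}\in[1:M]$ and $q_j\in\mathbf{Q}_j$ and expand the left-hand side of \eqref{probpriv} by marginalizing over $F$:
\[
\mathrm{Pr}\{\mathcal{Q}_j^{(m)}=q_j\}
=\sum_{f\in\mathcal{F}}\mathrm{Pr}\{\mathcal{Q}_j^{(m)}=q_j\mid F=f\}\,\mathrm{Pr}\{F=f\}
=\frac{|\{F\mid \mathcal{Q}_j^{(m)}=q_j\}|}{|\mathcal{F}|},
\]
where the first step uses that the $j$-th coordinate of $\psi(m,\cdot)$ is a deterministic function of $F$, and the second uses uniformity of $F$. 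Writing the same identity with $m^{\prime}$ and observing that $|\mathcal{F}|>0$ is a positive constant that does not depend on the desired index, multiplying \eqref{probpriv} by $|\mathcal{F}|$ yields \eqref{theoremrobust}, and dividing \eqref{theoremrobust} by $|\mathcal{F}|$ recovers \eqref{probpriv}.

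The role of Lemma \ref{uniq} and Lemma \ref{asm} in this argument is structural rather than computational. Lemma \ref{uniq} guarantees that the desired index is recoverable from the full query tuple, legitimizing the bijectivity of $\psi(m,F)\mapsto Q_{1:S}^{(m)}$ that is being exploited in the marginalization; Lemma \ref{asm} ensures that the preimage count $|\{F\mid\mathcal{Q}_j^{(m)}=q_j\}|$ behaves consistently when the desired index is changed, so that the only quantity that can possibly distinguish \eqref{probpriv} for different $m$ is the cardinality on the right-hand side of \eqref{theoremrobust}.

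I do not expect a substantive mathematical obstacle, since the computational core is a one-line application of uniformity. The only delicate point is \emph{scope}: the statement must be read for every $j$ in both $[1:S+U]$ (robust) and $[1:S+2B]$ (Byzantine), so I would phrase the counting argument once for a generic index $j$ in the appropriate range and remark that it applies verbatim in both settings, which together with Lemma \ref{sufficientlemma} packages \eqref{theoremrobust} as a directly verifiable sufficient condition on the query construction.
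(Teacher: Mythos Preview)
Your counting argument is correct and considerably more direct than the paper's own proof. The paper does \emph{not} simply observe that uniformity of $F$ gives $\mathrm{Pr}\{\mathcal{Q}_j^{(m)}=q_j\}=|\{F\mid\mathcal{Q}_j^{(m)}=q_j\}|/|\mathcal{F}|$; instead it runs a long chain of equivalences through the random index $\theta$: from \eqref{probpriv} to $\mathrm{Pr}\{\theta=m\mid\mathcal{Q}_j^{(\theta)}=q_j\}=1/M$, to $H(\theta\mid\mathcal{Q}_j^{(\theta)}=q_j)=\log M$, then via Lemma~\ref{uniq} and Lemma~\ref{asm} to a mutual-information identity, and finally back to cardinalities of query realizations and of $F$-preimages using the bijectivity of $\psi$. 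What the paper's route buys is that the intermediate step $H(\theta\mid\mathcal{Q}_j^{(\theta)}=q_j)=\log M$ is exactly the privacy statement, so the chain makes visible that both \eqref{probpriv} and \eqref{theoremrobust} are equivalent to privacy itself; your route buys brevity and dispenses with the two lemmas entirely. On that last point, your paragraph explaining the ``structural'' role of Lemmas~\ref{uniq} and~\ref{asm} is not quite accurate: in your argument they are genuinely unused (bijectivity of $\psi$ is a standing assumption, not something Lemma~\ref{uniq} supplies), so you should either drop that paragraph or state plainly that the direct counting bypasses both lemmas.
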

  \begin{proof}
    \begin{align}
    &\mathrm{Pr}\{\mathcal{Q}_j^{(m)}=q_j\}=\mathrm{Pr}\{\mathcal{Q}_j^{(m^{\prime})}=q_j\} 
    \\&\Leftrightarrow \mathrm{Pr}\{\mathcal{Q}_j^{(\theta)}=q_j | \theta=m \} = \mathrm{Pr}\{\mathcal{Q}_j^{(\theta)}=q_j  | \theta=m^\prime \}
           \\ &\Leftrightarrow \mathrm{Pr}\{\theta=m | \mathcal{Q}_j^{(\theta)}=q_j\} = \mathrm{Pr}\{\theta=m^{\prime} | \mathcal{Q}_j^{(\theta)}=q_j\}
    \\ &\Leftrightarrow \mathrm{Pr}\{\theta=m | \mathcal{Q}_j^{(\theta)}=q_j\} = \frac{1}{M} 
    \\ &\Leftrightarrow H(\theta | \mathcal{Q}_j^{(\theta)}=q_j) = \log M \label{p2}
    \\ &\Leftrightarrow H(\theta | \mathcal{Q}_j^{(\theta)}=q_j) - H(\theta | \mathcal{Q}_{\bar{j}}^{(\theta)},\mathcal{Q}_j^{(\theta)}=q_j)= \log M  \label{robustp3} 
    \\ &\Leftrightarrow I(\theta;\mathcal{Q}_{\bar{j}}^{(\theta)} | \mathcal{Q}_j^{(\theta)}=q_j) = \log M
       \\ &\Leftrightarrow H(\mathcal{Q}_{\bar{j}}^{(\theta)} | \mathcal{Q}_j^{(\theta)}=q_{j}) 
        =\log M + H(\mathcal{Q}_{\bar{j}}^{(m)} | \mathcal{Q}_j^{(m)}=q_{j}) \label{entrobust}
       \\& \Leftrightarrow |\{\mathcal{Q}_{1:S}^{(\theta)} | \mathcal{Q}_j^{(\theta)} = q_j\}|=M|\{\mathcal{Q}_{1:S}^{(m)} | \mathcal{Q}_j^{(m)} = q_j\}| \label{pri1}
       \\& \Leftrightarrow |\{\mathcal{Q}_{1:S}^{(m)} | \mathcal{Q}_j^{(m)}=q_{j}\}|=|\{\mathcal{Q}_{1:S}^{(m^\prime)} | \mathcal{Q}_j^{(m^\prime)}=q_{j}\}| 
       \\& \Leftrightarrow |\{\psi(m,f) | \mathcal{Q}_j^{(m)}=q_{j}\}| =|\{\psi(m^\prime,f) | \mathcal{Q}_j^{(m^\prime)}=q_{j}\}| 
       \\& \Leftrightarrow |\{f | \mathcal{Q}_j^{(m)}=q_{j}\}|=|\{f |\mathcal{Q}_j^{(m^\prime)}=q_{j}\}| . \label{conc1}
     \end{align}
     Equation (\ref{robustp3}) is due to Lemma \ref{uniq}, 
     and (\ref{entrobust}) is due to Lemma \ref{asm}.
     Equation (\ref{pri1}) is because (\ref{entrobust}) indicates the number of query realizations is $M$ times larger when $\theta$ is not given than when $\theta = m$,
     and also because $f$ is chosen uniformly from $\mathcal{F}$.
     Equation (\ref{conc1}) is because $\psi(\cdot)$ is a bijective function.
   \end{proof}
  This result indicates that the number of feasible $f$ must be constant for all $\theta$ even though a single query $q_j$ is obtained by the $j$-th server for any $j \in [1:S]$, so that the server can not infer the queries sent to the other servers.
 Due to Lemma \ref{sufficientlemma} and Theorem \ref{privtheorem}, (\ref{theoremrobust}) is the sufficient condition for privacy properties (\ref{robustprivacy}) and (\ref{byzantineprivacy}).
 However, some studies \cite{8720234,8423707,8849275} consider (\ref{probpriv}) as the privacy property.
 In that case, (\ref{theoremrobust}) is the necessary and sufficient condition for privacy.
 This condition means that the overall randomness of the queries estimated by the server to be equal, indicating that the privacy property can be evaluated based solely on the random elements of the queries.

 \subsection{Conditions for Robust and Byzantine Capacity}
We first derive the conditions for the capacity achievability of PIR, and then extend them to the conditions for robust and Byzantine PIR.
 Let $\pi$ be a random substitution for $[1:M]$.
  The capacity of PIR is derived using Lemma 5 and Lemma 6 in \cite{7889028}.
  The necessary and sufficient conditions for achieving the capacity are that the inequalities in the derivations of these two lemmas are satisfied with equal signs.
  Therefore, the following equations are the necessary and sufficient conditions for achieving the capacity:
  \begin{align}
    &H(\mathcal{X}_{1:S}^{(\pi[1])}|\mathcal{Q}_{1:S}^{(\pi[1])}) = \sum_{j\in[1:S]} H(\mathcal{X}_j^{(\pi[1])}|\mathcal{Q}_{1:S}^{(\pi[1])})\label{enum:m1},
    \\&I(\mathcal{W}_{\pi[m:M]} ; \mathcal{Q}_{1:S}^{(\pi[m-1])}, \mathcal{X}_{1:S}^{(\pi[m-1])}| \mathcal{W}_{\pi[1:m-1]}) 
    = \frac{1}{S} \sum_{j\in[1:S]} I(\mathcal{W}_{\pi[m:M]} ; \mathcal{Q}_{1:S}^{(\pi[m-1])}, \mathcal{X}_{j}^{(\pi[m-1])}| \mathcal{W}_{\pi[1:m-1]}) \label{enum:m2},
    \\&\frac{1}{S} \sum_{j\in [1:S]} H(\mathcal{X}_{j}^{(\pi[m])}| \mathcal{Q}_{1:S}^{(\pi[m])}, \mathcal{W}_{\pi[1:m-1]}) 
    = \frac{1}{S} \sum_{j\in [1:S]} H(\mathcal{X}_{j}^{(\pi[m])}| \mathcal{Q}_{1:S}^{(\pi[m])}, \mathcal{W}_{\pi[1:m-1]},\mathcal{X}_{1:j-1}^{(\pi[m])})\label{enum:m3}.
  \end{align}
  Let $\bar{m} = [1:M] \setminus \{m\}$,
  let $\bar{\mathcal{I}}$ be an arbitrary subset of $\bar{m}$, and let $\mathcal{I}$ be the complement of $\bar{\mathcal{I}}$.
  By using (\ref{enum:m1})-(\ref{enum:m3}), we can derive the following theorem.
\begin{lemma}[Conditions for Robust and Byzantine Capacity Achievability] \label{capacitylemma}
  The necessary and sufficient conditions for capacity-achieving PIR schemes are as follows:
  \begin{align}
  &\mathrm{rank}\ Q_{1:S}^{(m)}[\bar{m}]= \max_{j\in[1:S]}\ \mathrm{rank}\ Q_j^{(m)}[\bar{m}],\label{lemma4-1}\\
  &\mathrm{rank}\ Q_{1:S}^{(m)}[\mathcal{I}] = \sum_{j\in[1:S]} \mathrm{rank}\ Q_j^{(m)}[\mathcal{I}]\label{lemma4-2}.
  \end{align}
\end{lemma}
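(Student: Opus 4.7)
The plan is to translate each of the three entropy equalities (\ref{enum:m1})--(\ref{enum:m3}) into a statement about ranks of submatrices of the stacked query matrix, and then consolidate the resulting family of rank equalities --- indexed by the permutation $\pi$ and the message index $m$ --- into the two conditions (\ref{lemma4-1}) and (\ref{lemma4-2}). The workhorse is the following rank--entropy correspondence for linear PIR: since $X_j^{(m)} = \sum_k Q_j^{(m)}[k] W_k$ with the $W_k$ mutually independent and uniform over $\mathbb{F}_q^{L_w}$, for any $\mathcal{J} \subset [1:S]$ and any $\mathcal{S} \subset [1:M]$,
\[
H\bigl(\mathcal{X}_{\mathcal{J}}^{(m)} \,\big|\, \mathcal{Q}_{1:S}^{(m)},\, \mathcal{W}_{[1:M]\setminus \mathcal{S}}\bigr) \;=\; \bigl(\mathrm{rank}\ Q_{\mathcal{J}}^{(m)}[\mathcal{S}]\bigr)\,\log q,
\]
because conditioning on the off-index messages removes those terms and leaves a linear function of $\mathcal{W}_{\mathcal{S}}$ whose conditional entropy equals the rank of the corresponding column-block submatrix.

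Next I would apply this identity to (\ref{enum:m3}): the left-hand side becomes $(\sum_j \mathrm{rank}\ Q_j^{(\pi[m])}[\pi[m:M]])\log q$, while the right-hand side, by the chain rule, equals the joint entropy $(\mathrm{rank}\ Q_{1:S}^{(\pi[m])}[\pi[m:M]])\log q$. Hence (\ref{enum:m3}) collapses to rank additivity $\mathrm{rank}\ Q_{1:S}^{(\pi[m])}[\pi[m:M]] = \sum_j \mathrm{rank}\ Q_j^{(\pi[m])}[\pi[m:M]]$. As $\pi$ ranges over permutations with $\pi[m]$ fixed to the desired index $m$, the column set $\pi[m:M]$ sweeps every $\mathcal{I} \subset [1:M]$ containing $m$, which is exactly (\ref{lemma4-2}); equation (\ref{enum:m1}) is the instance $m=1$ and gives the $\mathcal{I}=[1:M]$ case.

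For (\ref{enum:m2}), since $\mathcal{X}_{1:S}^{(\pi[m-1])}$ is a deterministic function of $(\mathcal{Q}, \mathcal{W}_{1:M})$ and $\mathcal{Q}$ is independent of the messages, the mutual informations on both sides collapse to conditional entropies of the response on the column block $\pi[m:M]$. Equation (\ref{enum:m2}) then reads ``joint rank equals the per-server average rank.'' Because the joint rank is always at least the maximum per-server rank, which is at least the average, equality forces both (i) the joint rank to match the maximum and (ii) every per-server rank to coincide. Taking $\pi[m-1]=m$ and $\pi[m:M]=\bar{m}$ isolates part (i) on the full complement, and a standard projection argument (if one server's rows already span the joint row space over all columns of $\bar m$, the same containment survives passage to any column subset) shows that this case implies the analogous equality on every subset of $\bar{m}$; this is (\ref{lemma4-1}). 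The converse direction is a direct reversal of these calculations.

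The main obstacle I expect is item (ii) above: equality in (\ref{enum:m2}) further demands that $\mathrm{rank}\ Q_j^{(m)}[\bar{m}]$ be constant in $j$, a condition not spelled out in (\ref{lemma4-1}). The cleanest resolution is to invoke the per-server symmetry that the privacy setup already imposes --- each $Q_j^{(m)}$ has a marginal distribution independent of $m$ (Theorem \ref{privtheorem}), and capacity-achieving constructions are symmetric across servers --- so the per-server ranks match by construction and (\ref{lemma4-1}) carries the full weight of (\ref{enum:m2}). A secondary technical point is that because $\mathcal{Q}$ is random, the rank equalities should be interpreted as holding with probability one over the query realizations, which I would state explicitly when formulating the workhorse identity.
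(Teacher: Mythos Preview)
Your approach is essentially the paper's: both reduce (\ref{enum:m1})--(\ref{enum:m3}) to rank conditions on column blocks of the stacked query matrix via the linear structure of the responses. Two points of comparison.

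\emph{Tactical difference on (\ref{enum:m2}).} The paper first observes that (\ref{enum:m3}) already implies (\ref{enum:m1}), then rewrites (\ref{enum:m2}) through the intermediate equivalence
\[
H(\mathcal{X}_j^{(m)}\mid \mathcal{X}_i^{(m)},\mathcal{W}_m,\mathcal{Q}_{1:S}^{(m)})=0\quad\text{for all }i,j,
\]
i.e., any single response determines every other once $W_m$ is known; this says the row spaces of $Q_j^{(m)}[\bar m]$ coincide, from which (\ref{lemma4-1}) is read off. You instead pass directly from ``joint entropy $=$ average of marginals'' to ``joint rank $=$ max rank and all per-server ranks equal.'' The two routes are equivalent (joint $=$ average forces joint $=$ each marginal, hence equal row spaces), but the paper's intermediate step makes the equal-row-space picture explicit.

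\emph{The obstacle you flag is real, and the paper does not close it either.} Equality in (\ref{enum:m2}) indeed forces all $\mathrm{rank}\,Q_j^{(m)}[\bar m]$ to coincide, a condition not contained in (\ref{lemma4-1}) alone (e.g., $Q_1^{(m)}[\bar m]$ of rank $2$ and $Q_2^{(m)}[\bar m]$ of rank $1$ with row space inside the first satisfies (\ref{lemma4-1}) but violates (\ref{enum:m2})). The paper only argues the necessity direction here and never checks that (\ref{lemma4-1}) suffices. Your proposed fix, however, does not work: Theorem~\ref{privtheorem} gives identical marginals of $Q_j^{(m)}$ across different $m$, not across different $j$, and ``capacity-achieving constructions are symmetric across servers'' is an extra hypothesis not stated in the lemma. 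If you want a clean equivalence, either add an explicit server-symmetry assumption or strengthen (\ref{lemma4-1}) to ``the row spaces of $Q_j^{(m)}[\bar m]$ are equal for all $j$.''
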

\begin{proof}
  Because the following equations hold, (\ref{enum:m3}) is a sufficient condition for (\ref{enum:m1}):
  
      \begin{align}
        &\frac{1}{S} \sum_{j\in [1:S]} H(\mathcal{X}_{j}^{(\pi[m])}| \mathcal{Q}_{1:S}^{(\pi[m])}, \mathcal{W}_{\pi[1:m-1]}) 
        = \frac{1}{S} \sum_{j\in [1:S]} H(\mathcal{X}_{j}^{(\pi[m])}| \mathcal{Q}_{1:S}^{(\pi[m])}, \mathcal{W}_{\pi[1:m-1]},\mathcal{X}_{1:j-1}^{(\pi[m])}) \\
        &\Rightarrow \sum_{j\in [1:S]} H(\mathcal{X}_{j}^{(\pi[1])}| \mathcal{Q}_{1:S}^{(\pi[1])}) 
        = \sum_{j\in [1:S]} H(\mathcal{X}_{j}^{(\pi[1])}| \mathcal{Q}_{1:S}^{(\pi[1])},\mathcal{X}_{1:j-1}^{(\pi[1])}) \label{c2}\\
        &\Leftrightarrow \sum_{j\in [1:S]} H(\mathcal{X}_{j}^{(\pi[1])}| \mathcal{Q}_{1:S}^{(\pi[1])}) = H(\mathcal{X}_{1:S}^{(\pi[1])}|\mathcal{Q}_{1:S}^{(\pi[1])}). \label{c3}
      \end{align}
  Therefore, (\ref{enum:m1}) is unnecessary.
  For (\ref{enum:m2}), the following equations hold for any $i,j\in [1:S]$:
      \begin{align}
        &I(\mathcal{W}_{\pi[m:M]} ; \mathcal{Q}_{1:S}^{(\pi[m-1])}, \mathcal{X}_{1:S}^{(\pi[m-1])}| \mathcal{W}_{\pi[1:m-1]}) 
        = \frac{1}{S} \sum_{j\in [1:S]} I(\mathcal{W}_{\pi[m:M]} ; \mathcal{Q}_{1:S}^{(\pi[m-1])}, \mathcal{X}_{j}^{(\pi[m-1])}| \mathcal{W}_{\pi[1:m-1]})
        \\&\Leftrightarrow H(\mathcal{X}_{1:S}^{(\pi[m-1])} | \mathcal{W}_{\pi[1:m-1]}, \mathcal{Q}_{1:S}^{(\pi[m-1])})
        = \frac{1}{S} \sum_{j\in [1:S]} H(\mathcal{X}_{j}^{(\pi[m-1])} | \mathcal{W}_{\pi[1:m-1]},\mathcal{Q}_{1:S}^{(\pi[m-1])}) 
        \\&\Leftrightarrow H(\mathcal{X}_j^{(\pi[m-1])} | \mathcal{W}_{\pi[1:m-1]},\mathcal{Q}_{1:S}^{(\pi[m-1])},\mathcal{X}_i^{(\pi[m-1])}) = 0 
        \\&\Leftrightarrow H(\mathcal{X}_j^{(\pi[m])} | \mathcal{W}_{\pi[1:m]},\mathcal{Q}_{1:S}^{(\pi[m])},\mathcal{X}_i^{(\pi[m])}) = 0 
        \\&\Leftrightarrow H(\mathcal{X}_j^{(m)} | \mathcal{W}_{m},\mathcal{Q}_{1:S}^{(m)},\mathcal{X}_i^{(m)}) = 0 .\label{m0}
    \end{align}
  We can summarize the conditions as follows:
  \begin{align}
    &H(\mathcal{X}_{j}^{(m)}|\mathcal{X}_i^{(m)},\mathcal{W}_{m}, \mathcal{Q}_{1:S}^{(m)}) = 0\label{enum:m5},
    \\&H(\mathcal{X}_{j}^{(m)}| \mathcal{X}_{i}^{(m)}, \mathcal{W}_{\bar{\mathcal{I}}},\mathcal{Q}_{1:S}^{(m)}) = H(\mathcal{X}_{j}^{(m)}|  \mathcal{W}_{\bar{\mathcal{I}}}, \mathcal{Q}_{1:S}^{(m)})\label{enum:m7}.
\end{align}
Notably, the query to the $j$-th server is expressed as $Q_j^{(m)}=(Q_j^{(m)}[1],\dots,Q_j^{(m)}[M])$.
  Let $\bar{m} = [1:M]\setminus \{m\}$, 
  For any $i,j\in [1:S]$, the server responses are given as follows:
  \begin{align}
   X_i^{(m)}=Q_i^{(m)} [\bar{m}]W_{\bar{m}} + Q_i^{(m)}[m]W_m, \\
   X_j^{(m)}=Q_j^{(m)} [\bar{m}]W_{\bar{m}} + Q_j^{(m)}[m]W_m.
  \end{align}
  Equation (\ref{enum:m5}) indicates $X_j^{(m)}$ is determined by $X_i^{(m)}$ 
  without $W_{\bar{m}}$, so $Q_j^{(m)}[\bar{m}]$ is a multiple of $Q_i^{(m)}[\bar{m}]$ or a zero matrix for any $j \in [1:S]$.  
  Therefore, (\ref{lemma4-1}) holds;
  the responses from the servers can be expressed as follows:
  \begin{align}
    X_i^{(m)}=Q_i^{(m)} [\bar{\mathcal{I}}]W_{\bar{\mathcal{I}}} + Q_i^{(m)}[\mathcal{I}]W_\mathcal{I}, \\
    X_j^{(m)}=Q_j^{(m)} [\bar{\mathcal{I}}]W_{\bar{\mathcal{I}}} + Q_j^{(m)}[\mathcal{I}]W_\mathcal{I}.
   \end{align}
   Equation (\ref{enum:m7}) indicates $X_j^{(m)}$ is not obtained by $X_i^{(m)}$ without knowing $W_\mathcal{I}$,
   so $Q_i^{(m)}[\mathcal{I}]$ and $Q_j^{(m)}[\mathcal{I}]$ are linearly independent or either is a zero matrix.
   Therefore, (\ref{lemma4-2}) holds.
\end{proof}
These conditions indicate that $Q_{j}^{(m)}[m]$ must vary from server to server, and $Q_{j}^{(m)}[\bar{m}]$ must be possible to cancel each other out for any $j \in [1:S]$, so that $W_m$ is recovered with fewer queries.

By using the results above, the conditions for capacity achievability of robust and Byzantine PIR are obtained as following theorem:
\begin{theorem} [Conditions for Robust and Byzantine Capacity] \label{conditionsforrobustcapacity} 
  The necessary and sufficient conditions for capacity-achieving robust and Byzantine PIR schemes are as follows:
  \begin{align}
    &\mathrm{rank}\ Q_{1:S+\mu}^{(m)}[\bar{m}]=\max_{j \in [1:S+\mu]}\ \mathrm{rank}\ Q_j^{(m)}[\bar{m}], 
  \end{align}
  \begin{align}
    &\forall \kappa \subset[1:S+\mu],\ |\kappa|=S, \notag
    \\&\mathrm{rank}\ Q_{\kappa}^{(m)}[\mathcal{I}]=\sum_{j \in \kappa} \mathrm{rank}\ Q_j^{(m)}[\mathcal{I}],
    \end{align}
    where $\mu=U$ for robust PIR, and $\mu=2B$ for Byzantine PIR.
\end{theorem}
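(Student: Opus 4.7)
The plan is to lift Lemma \ref{capacitylemma} from the $S$-server base case to the $(S+\mu)$-server setting by showing that capacity achievability on $S+\mu$ servers is equivalent to the base capacity conditions holding on every $S$-sized subset. The intuition is that in robust PIR any $U$ of the $S+\mu$ servers may be unresponsive, so the decoder must still attain the stated rate using only the remaining $S$ responses; in Byzantine PIR, the extra factor $S/(S+2B)$ in (\ref{byzantinecapacity}) is exactly the MDS rate required to correct $B$ arbitrary errors, after which the effective decoder operates on $S$ reliable symbols. In both cases the structural conditions of Lemma \ref{capacitylemma} must apply to every $S$-subset of servers.

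First I would formalise the reduction. For robust PIR, I would condition on the identity of the surviving set and observe that the download rate must equal capacity regardless of which $U$ servers fail; consequently the converse chain of \cite{7889028} that was used to derive Lemma \ref{capacitylemma} must saturate on every $\kappa \subset [1:S+U]$ with $|\kappa|=S$. For Byzantine PIR, following the converse of \cite{8457293}, an optimal decoder can be assumed to first apply MDS error correction and then decode from $S$ noise-free symbols, so tightness of the bound again forces (\ref{lemma4-1})--(\ref{lemma4-2}) on every such $\kappa$.

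Applying Lemma \ref{capacitylemma} to each such $\kappa$, equation (\ref{lemma4-2}) gives the second condition of the theorem verbatim, since the partition $\bar{\mathcal{I}}, \mathcal{I}$ of $[1:M]$ is arbitrary. For the first condition, (\ref{lemma4-1}) yields, for every $\kappa$ of size $S$,
\begin{align}
\mathrm{rank}\ Q_\kappa^{(m)}[\bar{m}] = \max_{j \in \kappa} \mathrm{rank}\ Q_j^{(m)}[\bar{m}].
\end{align}
Let $j^{*} \in [1:S+\mu]$ be a global maximiser of $\mathrm{rank}\ Q_j^{(m)}[\bar{m}]$. For any other $j$, since $S \geq 2$ one can choose an $S$-subset $\kappa$ containing both $j$ and $j^{*}$; the displayed equality forces the rows of $Q_j^{(m)}[\bar{m}]$ into the row space of $Q_{j^{*}}^{(m)}[\bar{m}]$. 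Iterating over $j$ gives $\mathrm{rank}\ Q_{1:S+\mu}^{(m)}[\bar{m}] = \max_{j\in[1:S+\mu]} \mathrm{rank}\ Q_j^{(m)}[\bar{m}]$, which is the first condition. The converse direction is immediate: restricting the stated conditions to any $\kappa$ with $|\kappa|=S$ reproduces (\ref{lemma4-1})--(\ref{lemma4-2}), so Lemma \ref{capacitylemma} supplies capacity achievability subset by subset.

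The main obstacle I anticipate is the Byzantine reduction, since a priori an optimal decoder may interleave error correction with message decoding, and I must justify that no rate is gained by such mixing. I would handle this by inspecting the converse chain of \cite{8457293} and checking that the step introducing the factor $S/(S+2B)$ is exactly the MDS rate loss, while the remaining inequalities match those in the proof of Lemma \ref{capacitylemma}; simultaneous tightness of both layers then forces the rank conditions on every $S$-subset, closing the argument.
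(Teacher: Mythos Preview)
Your proposal is correct and follows essentially the same route as the paper: reduce the $(S+\mu)$-server capacity question to the base-case Lemma~\ref{capacitylemma} on every size-$S$ subset $\kappa$, using the MDS structure to account for the erasure/error-correction layer and then requiring the remaining inequalities in the converse chain to be tight subset-by-subset. Your derivation is in fact more explicit than the paper's on one point: the paper simply states the global condition $\mathrm{rank}\ Q_{1:S+\mu}^{(m)}[\bar m]=\max_{j}\mathrm{rank}\ Q_j^{(m)}[\bar m]$ after invoking Lemma~\ref{capacitylemma}, whereas you supply the row-space argument (pick $j^*$ maximising the rank, pair it with any other $j$ inside a common $\kappa$) that actually deduces the global statement from the per-$\kappa$ equalities.
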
 
\begin{proof}
  The capacity of robust PIR is derived as follows \cite{8119895}:
  \begin{align}
  \mathcal{R} &= \frac{L_w}{\sum_{j\in [1:S+U]} H(\mathcal{X}_j^{(m)}|\mathcal{Q}_{1:S+U}^{(m)})} 
  \\&\leq  \frac{L_w}{\sum_{j\in \kappa, \kappa\subset [1:S+U],|\kappa|=S} H(\mathcal{X}_j^{(m)}|\mathcal{Q}_{1:S+U}^{(m)})} \label{rsing}
  \\&\leq  \frac{1-\frac{1}{S}}{1-(\frac{1}{S})^M}. \label{rcap}
\end{align}
Equation (\ref{rsing}) is because $\mathcal{X}_{1:S+U}$ is an erasure-correction code for recovering $U$ responses \cite{8119895}. 
The necessary and sufficient conditions for achieving the capacity are that these inequalities are satisfied with equal signs.
Equation (\ref{rsing}) is satisfied with an equal sign by using MDS encoding for erasure correction.
Equation (\ref{rcap}) is satisfied with an equal sign by satisfying the capacity of basic PIR with any $S$ servers.
Therefore, by using Lemma \ref{capacitylemma}, 
the conditions for the capacity of robust PIR are as follows:
\begin{align}
&\mathrm{rank}\ Q_{1:S+U}^{(m)}[\bar{m}]= \max_{j\in [1:S+U]}\mathrm{rank}\ Q_j^{(m)}[\bar{m}], 
  \end{align}
  \begin{align}
&\forall \kappa \subset[1:S+U],\ |\kappa|=S, \ 
\mathrm{rank}\ Q_{\kappa}^{(m)}[\mathcal{I}]=\sum_{j \in \kappa} \mathrm{rank}\ Q_j^{(m)}[\mathcal{I}].
\end{align}

The capacity of Byzantine PIR is derived as follows \cite{8457293}:
\begin{align}
  \mathcal{R} &= \frac{L_w}{\sum_{j\in [1:S+2B]} H(\mathcal{X}_j^{(m)}|\mathcal{Q}_{1:S+2B}^{(m)})} 
  \\&\leq \frac{S}{S+2B} \frac{L_w}{\sum_{j\in \kappa, \kappa\subset [1:S+2B],|\kappa|=S} H(\mathcal{X}_j^{(m)}|\mathcal{Q}_{1:S+2B}^{(m)})} \label{bsing}
  \\&\leq \frac{S}{S+2B} \frac{1-\frac{1}{S}}{1-(\frac{1}{S})^M}. \label{bcap}
\end{align}
  Equation (\ref{bsing}) is because $\mathcal{X}_{1:S+2B}$ is an error-correction code for correcting $B$ noisy responses \cite{8457293}.
The necessary and sufficient conditions for achieving the capacity are that these inequalities are satisfied with equal signs.
 Equation (\ref{bsing}) is satisfied with an equal sign by using MDS encoding for error correction.
 Equation (\ref{bcap}) is satisfied with an equal sign by satisfying the capacity of basic PIR in any $S$ servers.
Therefore, by using Lemma \ref{capacitylemma}, 
the conditions for the capacity of Byzantine PIR are as follows:
\begin{align}
  &\mathrm{rank}\ Q_{1:S+2B}^{(m)}[\bar{m}]=\max_{j\in [1:S+2B]}\ \mathrm{rank}\ Q_j^{(m)}[\bar{m}], 
  \end{align}
  \begin{align}
  &\forall \kappa \subset[1:S+2B],\ |\kappa|=S, \ 
  \mathrm{rank}\ Q_{\kappa}^{(m)}[\mathcal{I}]=\sum_{j \in \kappa} \mathrm{rank}\ Q_j^{(m)}[\mathcal{I}].
\end{align}
\end{proof}

 \subsection{Conditions for Colluding Privacy}
Let $\mathcal{T}$ be an arbitrary index set of $T$ colluding servers, 
then the following lemma holds:
\begin{lemma}[Sufficient Condition for Colluding Privacy] \label{colludingsufficientpriv}
  For any $m,m^{\prime} \in [1:M]$ and $q_{\mathcal{T}} \in \mathbf{Q}_{\mathcal{T}}$, 
  the sufficient condition for (\ref{privacy}) is obtained as follows:
  \begin{align}
     \mathrm{Pr}\{\mathcal{Q}_{\mathcal{T}}^{(m)}=q_\mathcal{T}\}=\mathrm{Pr}\{\mathcal{Q}_{\mathcal{T}}^{(m^{\prime})}=q_{\mathcal{T}}\}\label{colludingsufficient}.
 \end{align}
\end{lemma}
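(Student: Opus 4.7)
The plan is to adapt, essentially verbatim, the single-server argument used in Lemma \ref{sufficientlemma} to the colluding setting, replacing the single index $j$ by the $T$-element colluding set $\mathcal{T}$. The asserted sufficient condition (\ref{colludingsufficient}) says exactly that the conditional distribution $\Pr\{\mathcal{Q}_\mathcal{T}^{(\theta)} = q_\mathcal{T} \mid \theta = m\}$ does not depend on $m$, so $\mathcal{Q}_\mathcal{T}^{(\theta)}$ and $\theta$ are independent.

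Next I would invoke the two modeling assumptions already in place: $\theta$ is drawn uniformly on $[1:M]$ independently of $\mathcal{W}_{1:M}$, and the query tuple $\mathcal{Q}_\mathcal{T}^{(\theta)}$ is produced by the user as the projection of $\psi(\theta, F)$ onto the coordinates in $\mathcal{T}$, where $F$ is internal randomness independent of the messages. Together these give $(\theta, \mathcal{Q}_\mathcal{T}^{(\theta)}) \perp \mathcal{W}_{1:M}$. Combining this with $\theta \perp \mathcal{Q}_\mathcal{T}^{(\theta)}$ obtained in the previous step, one gets the full factorization
\[\Pr\{\theta = m, \mathcal{Q}_\mathcal{T}^{(\theta)} = q_\mathcal{T}, \mathcal{W}_{1:M} = w_{1:M}\} = \Pr\{\theta = m\}\,\Pr\{\mathcal{Q}_\mathcal{T}^{(\theta)} = q_\mathcal{T}\}\,\Pr\{\mathcal{W}_{1:M} = w_{1:M}\},\]
which is joint independence of $\theta$ and the pair $(\mathcal{Q}_\mathcal{T}^{(\theta)}, \mathcal{W}_{1:M})$. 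Joint independence immediately yields $I(\theta;\mathcal{Q}_\mathcal{T}^{(\theta)}, \mathcal{W}_{1:M}) = 0$, which is the colluding privacy requirement (\ref{privacy}).

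I do not anticipate a genuine obstacle: the colluding case is structurally identical to the single-server case of Lemma \ref{sufficientlemma}, since what matters is only that the marginal distribution of the \emph{observed} query variable (here the tuple $\mathcal{Q}_\mathcal{T}^{(\theta)}$ rather than a single $\mathcal{Q}_j^{(\theta)}$) does not depend on the desired index. The only item requiring a word of care is to combine the two pairwise independences into \emph{joint} independence via the factorization above, rather than implicitly treating ``$\theta \perp \mathcal{Q}_\mathcal{T}^{(\theta)}$'' and ``$\theta \perp \mathcal{W}_{1:M}$'' as though they automatically implied $\theta \perp (\mathcal{Q}_\mathcal{T}^{(\theta)}, \mathcal{W}_{1:M})$; the assumption that $F$ is independent of the messages is the missing ingredient that makes this step legitimate.
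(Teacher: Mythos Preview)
Your proposal is correct and follows essentially the same approach as the paper: deduce from \eqref{colludingsufficient} that $\mathcal{Q}_{\mathcal{T}}^{(\theta)}$ is independent of $\theta$, then combine with the standing assumption that $\theta$ is chosen independently of $\mathcal{W}_{1:M}$ to conclude $I(\theta;\mathcal{Q}_\mathcal{T}^{(\theta)},\mathcal{W}_{1:M})=0$. If anything, your treatment is more careful than the paper's, since you explicitly justify the passage from the two pairwise independences to joint independence via the factorization using $F\perp\mathcal{W}_{1:M}$, whereas the paper simply asserts the conclusion.
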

\begin{proof}  
 (\ref{colludingsufficient}) indicates $\mathcal{Q}_{\mathcal{T}}^{(\theta)}$ and $\theta$ are independent of each other,
  Therefore, because the user selects $\theta$ independently of $\mathcal{W}_{1:S}$, 
  and (\ref{privacy}) is satisfied.
\end{proof}
Furthermore, due to Lemma \ref{colludingsufficientpriv}, the following lemma holds:
\begin{lemma} [Entropy of the Query for the Colluding Servers]
  \label{asmcolluding}
  The following equation holds for any $m,m^{\prime} \in [1:M]$: 
  \begin{align}
  H(\mathcal{Q}^{(m)}_{1:S}|\mathcal{Q}^{(m)}_{\mathcal{T}}=q_{\mathcal{T}}) = H(\mathcal{Q}^{(m^{\prime})}_{1:S}|\mathcal{Q}^{(m^{\prime})}_{\mathcal{T}}=q_{\mathcal{T}}). 
  \end{align}
\end{lemma}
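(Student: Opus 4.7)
The plan is to imitate the proof of Lemma \ref{asm}, replacing the single index $j$ by the colluding set $\mathcal{T}$; the hypothesis will now come from Lemma \ref{colludingsufficientpriv} rather than from Lemma \ref{sufficientlemma}.

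First, I would translate the equality $\mathrm{Pr}\{\mathcal{Q}_{\mathcal{T}}^{(m)}=q_{\mathcal{T}}\}=\mathrm{Pr}\{\mathcal{Q}_{\mathcal{T}}^{(m^{\prime})}=q_{\mathcal{T}}\}$ (valid at every $q_\mathcal{T}\in\mathbf{Q}_\mathcal{T}$) into conditional form, $\mathrm{Pr}\{\mathcal{Q}_{\mathcal{T}}^{(\theta)}=q_{\mathcal{T}}\mid\theta=m\}=\mathrm{Pr}\{\mathcal{Q}_{\mathcal{T}}^{(\theta)}=q_{\mathcal{T}}\mid\theta=m^{\prime}\}$. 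Since this matching holds at every realisation, the two conditional distributions of $\mathcal{Q}_{\mathcal{T}}^{(\theta)}$ coincide, so $H(\mathcal{Q}_{\mathcal{T}}^{(\theta)}\mid\theta=m)=H(\mathcal{Q}_{\mathcal{T}}^{(\theta)}\mid\theta=m^{\prime})$.

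Next, I would invoke the standing uniformity assumptions stated before Lemma \ref{asm}---$F$ uniform on $\mathcal{F}$ and $\mathcal{Q}_{1:S}^{(\theta)}$ uniform on $\mathbf{Q}_{1:S}$---which imply that $H(\mathcal{Q}_{1:S}^{(\theta)}\mid\theta)$ does not depend on $\theta$. Combining this with the previous step through the chain rule
\begin{equation*}
H(\mathcal{Q}_{1:S}^{(\theta)}\mid\theta=m)=H(\mathcal{Q}_{\mathcal{T}}^{(\theta)}\mid\theta=m)+H(\mathcal{Q}_{1:S}^{(\theta)}\mid\mathcal{Q}_{\mathcal{T}}^{(\theta)},\theta=m),
\end{equation*}
and its analogue for $m^{\prime}$, yields $H(\mathcal{Q}_{1:S}^{(\theta)}\mid\mathcal{Q}_{\mathcal{T}}^{(\theta)},\theta=m)=H(\mathcal{Q}_{1:S}^{(\theta)}\mid\mathcal{Q}_{\mathcal{T}}^{(\theta)},\theta=m^{\prime})$, i.e.\ equality of the $q_\mathcal{T}$-averaged conditional entropies.

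Finally, I would upgrade this averaged equality to equality at a \emph{fixed} $q_\mathcal{T}$, which is the step that plays the role of $(\ref{uni})$ in the proof of Lemma \ref{asm}. Under the uniformity of $F$ and the bijectivity of $\psi$, for every feasible $q_\mathcal{T}$ the pointwise entropy evaluates to $\log|\{F\mid\mathcal{Q}_{\mathcal{T}}^{(m)}=q_{\mathcal{T}}\}|$; Lemma \ref{colludingsufficientpriv} already forces this pre-image size to be independent of $m$, since under uniform $F$ the probability $\mathrm{Pr}\{\mathcal{Q}_{\mathcal{T}}^{(m)}=q_\mathcal{T}\}$ is proportional to that cardinality. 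Rewriting $H(\mathcal{Q}_{1:S}^{(\theta)}\mid\mathcal{Q}_{\mathcal{T}}^{(\theta)}=q_\mathcal{T},\theta=m)$ as $H(\mathcal{Q}_{1:S}^{(m)}\mid\mathcal{Q}_{\mathcal{T}}^{(m)}=q_\mathcal{T})$ then delivers the claim. I expect this passage from averaged to pointwise equality to be the only non-routine step, since Lemma \ref{colludingsufficientpriv} alone is insufficient without the uniformity of $F$ to turn the cardinality statement into the required entropy statement.
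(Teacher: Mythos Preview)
Your proposal is correct and follows essentially the same route as the paper: start from the probability equality of Lemma~\ref{colludingsufficientpriv}, pass to equality of $H(\mathcal{Q}_{\mathcal{T}}^{(\theta)}\mid\theta=m)$, and then use the constant-total-entropy assumption together with the uniformity of $F$ (and bijectivity of $\psi$) to obtain the pointwise conditional-entropy equality. The only difference is that you make the averaged-to-pointwise step explicit via the preimage-cardinality argument, whereas the paper compresses this passage into a single $\Leftrightarrow$ justified by the constant-entropy assumption.
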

\begin{proof}
  \begin{align}
      &\mathrm{Pr}\{\mathcal{Q}_{\mathcal{T}}^{(\theta)}=q_{\mathcal{T}} | \theta=m\} = \mathrm{Pr}\{\mathcal{Q}_{\mathcal{T}}^{(\theta)}=q_{\mathcal{T}} | \theta=m^{\prime}\}
      \\& \Rightarrow  H(\mathcal{Q}_{\mathcal{T}}^{(\theta)} | \theta=m) = H(\mathcal{Q}_{\mathcal{T}}^{(\theta)} | \theta=m^{\prime})
      \\& \Leftrightarrow  H(\mathcal{Q}^{(\theta)}_{1:S}|\mathcal{Q}^{(\theta)}_{\mathcal{T}}=q_{\mathcal{T}}, \theta=m)
       = H(\mathcal{Q}^{(\theta)}_{1:S}|\mathcal{Q}^{(\theta)}_{\mathcal{T}}=q_{\mathcal{T}}, \theta=m^{\prime}) \label{unicolluding}
      \\& \Leftrightarrow  H(\mathcal{Q}^{(m)}_{1:S}|\mathcal{Q}^{(m)}_{\mathcal{T}}=q_{\mathcal{T}}) = H(\mathcal{Q}^{(m^{\prime})}_{1:S}|\mathcal{Q}^{(m^{\prime})}_{\mathcal{T}}=q_{\mathcal{T}}). 
    \end{align}    
   Equation (\ref{unicolluding}) is due to 
   the assumption that the entropy of the entire query is constant regardless of the desired index.
\end{proof}  

Let $\bar{\mathcal{T}} = [1:S]\setminus \mathcal{T}$, then the following theorem holds due to Lemma \ref{asmcolluding}:
\begin{theorem}[Conditions for Colluding Privacy] \label{conditionsforcolludingpriv} 
    Let $f\in \mathcal{F}$ be a realization of the random factor $F$.
    For any $m,m^{\prime} \in [1:M]$, 
  $q_{\mathcal{T}} \in \mathbf{Q}_{\mathcal{T}}$, the following equations hold:
  \begin{align}
    &\mathrm{Pr}\{\mathcal{Q}_\mathcal{T}^{(m)}=q_\mathcal{T}\}=\mathrm{Pr}\{\mathcal{Q}_\mathcal{T}^{(m^{\prime})}=q_\mathcal{T}\}
    \\&\Leftrightarrow |\{f | \mathcal{Q}_{\mathcal{T}}^{(m)} = q_{\mathcal{T}}\}|=|\{f | \mathcal{Q}_{\mathcal{T}}^{(m^\prime)} = q_{\mathcal{T}}\}|\label{theorem4}. 
  \end{align}
  \end{theorem}
  \begin{proof}
    \begin{align}
      &\mathrm{Pr}\{\mathcal{Q}_\mathcal{T}^{(m)}=q_\mathcal{T}\}=\mathrm{Pr}\{\mathcal{Q}_\mathcal{T}^{(m^{\prime})}=q_\mathcal{T}\}
       \\&\Leftrightarrow \mathrm{Pr}\{\mathcal{Q}_{\mathcal{T}}^{(\theta)}=q_{\mathcal{T}} | \theta=m\} = \mathrm{Pr}\{\mathcal{Q}_{\mathcal{T}}^{(\theta)}=q_{\mathcal{T}} | \theta=m^{\prime}\}
       \\ &\Leftrightarrow \mathrm{Pr}\{\theta=m | \mathcal{Q}_{\mathcal{T}}^{(\theta)}=q_{\mathcal{T}}\} = \mathrm{Pr}\{\theta=m^{\prime} | \mathcal{Q}_{\mathcal{T}}^{(\theta)}=q_{\mathcal{T}}\}
       \\ &\Leftrightarrow \mathrm{Pr}\{\theta=m | \mathcal{Q}_{\mathcal{T}}^{(\theta)}=q_{\mathcal{T}}\} = \frac{1}{M} \label{p1}
       \\ &\Leftrightarrow H(\theta | \mathcal{Q}_{\mathcal{T}}^{(\theta)}=q_{\mathcal{T}}) = \log M 
       \\ &\Leftrightarrow H(\theta | \mathcal{Q}_{\mathcal{T}}^{(\theta)}=q_{\mathcal{T}}) - H(\theta | \mathcal{Q}_{\bar{\mathcal{T}}}^{(\theta)},\mathcal{Q}_{\mathcal{T}}^{(\theta)}=q_{\mathcal{T}})= \log M  \label{p3}
       \\ &\Leftrightarrow I(\theta;\mathcal{Q}_{\bar{\mathcal{T}}}^{(\theta)} | \mathcal{Q}_{\mathcal{T}}^{(\theta)}=q_{\mathcal{T}}) = \log M
       \\ &\Leftrightarrow H(\mathcal{Q}_{\bar{\mathcal{T}}}^{(\theta)} | \mathcal{Q}_{\mathcal{T}}^{(\theta)}=q_{\mathcal{T}}) = \log M + H(\mathcal{Q}_{\bar{\mathcal{T}}}^{(m)} | \mathcal{Q}_{\mathcal{T}}^{(m)}=q_{\mathcal{T}}) \label{ent}
       \\& \Leftrightarrow |\{\mathcal{Q}_{1:S}^{(\theta)} | \mathcal{Q}_{\mathcal{T}}^{(\theta)} = q_{\mathcal{T}}\}|=M|\{\mathcal{Q}_{1:S}^{(m)} | \mathcal{Q}_{\mathcal{T}}^{(m)} = q_{\mathcal{T}}\}| \label{pri}
       \\& \Leftrightarrow |\{\mathcal{Q}_{1:S}^{(m)} | \mathcal{Q}_{\mathcal{T}}^{(m)} = q_{\mathcal{T}}\}|=|\{\mathcal{Q}_{1:S}^{(m^\prime)} | \mathcal{Q}_{\mathcal{T}}^{(m^\prime)} = q_{\mathcal{T}}\}| 
       \\& \Leftrightarrow |\{\psi(m,f) | \mathcal{Q}_{\mathcal{T}}^{(m)} = q_{\mathcal{T}}\}|=|\{\psi(m^\prime,f) | \mathcal{Q}_{\mathcal{T}}^{(m^\prime)} = q_{\mathcal{T}}\}| 
       \\& \Leftrightarrow |\{f | \mathcal{Q}_{\mathcal{T}}^{(m)} = q_{\mathcal{T}}\}|=|\{f | \mathcal{Q}_{\mathcal{T}}^{(m^\prime)} = q_{\mathcal{T}}\}| \label{key}.
     \end{align}
     Equation (\ref{p3}) is because of Lemma \ref{uniq}. Equation (\ref{ent}) is because of Lemma \ref{asmcolluding}. 
     Equation (\ref{pri}) is because (\ref{ent}) indicates the number of query realizations is $M$ times larger when $\theta$ is not given than when $\theta = m$. 
     Equation (\ref{key}) is because $\psi(m,f)$ is a bijective function.
   \end{proof}
   Due to Lemma \ref{colludingsufficientpriv} and Theorem \ref{conditionsforcolludingpriv}, (\ref{theorem4}) is the sufficient condition for privacy property (\ref{privacy}).
   This result indicates that the number of feasible $f$ must be constant for all $\theta$ even though $T$ queries are obtained by colluding servers, so that it can not infer queries sent to the other servers.

  \subsection{Conditions for Colluding Capacity}
  The capacity of colluding PIR is derived in \cite{8119895}, and the entire proof is given in \cite{sunarxiv}.
  Let $\pi$ be a random permutation over $[1:M]$,
  then the necessary and sufficient conditions for capacity achievability are satisfying seven inequalities in 
  \cite{sunarxiv} with equal signs as follows:
  \begin{align}
    &H(\mathcal{X}_{1:S}^{(\pi[m])}|\mathcal{W}_{\pi[1:m-1]},\mathbf{Q}_{1:S})
    =\frac{S}{\begin{pmatrix}S\\T\end{pmatrix}}\sum_{\mathcal{T}} \frac{H(\mathcal{X}_{\mathcal{T}}^{(\pi[m])} | \mathcal{W}_{\pi[1:m-1]},\mathbf{Q}_{1:S})}{T}.\label{han2}
\end{align}
  \begin{align}
    &H(\mathcal{X}_{\bar{\mathcal{T}}}^{(\pi[m])}|\mathcal{X}_{\mathcal{T}}^{(\pi[m-1])},\mathcal{W}_{\pi[1:m-1]},\mathbf{Q}_{1:S})
    =\sum_{n \in \mathcal{\bar{T}}} H(\mathcal{X}_n^{(\pi[m])} | \mathcal{X}_{\mathcal{T}}^{(\pi[m-1])},\mathcal{W}_{\pi[1:m-1]},\mathbf{Q}_{1:S}).
\end{align}
  \begin{align}
   &H(\mathcal{X}_{\mathcal{T}}^{(\pi[m])}|\mathcal{W}_{\pi[1:m+1]},\mathbf{Q}_{1:S})
   = \frac{1}{\begin{pmatrix}S\\T\end{pmatrix}} \sum_{\mathcal{T}} H(\mathcal{X}_{\mathcal{T}}^{(\pi[m+1])}|\mathcal{W}_{\pi[1:m]},\mathbf{Q}_{1:S}).
\end{align}
  \begin{align}
    &\forall n \notin \mathcal{T}, \forall \mathcal{T}^{\prime}, |\mathcal{T}^{\prime}|=T-1,\notag 
    \\& H(\mathcal{X}_n^{(\pi[m])} | \mathcal{X}_{\mathcal{T}}^{(\pi[m-1])},\mathcal{W}_{\pi[1:m-1]},\mathbf{Q}_{1:S})
    =H(\mathcal{X}_n^{(\pi[m])}|\mathcal{X}_{\mathcal{T}^{\prime}}^{(\pi[m-1])},\mathcal{W}_{\pi[1:m-1]},\mathbf{Q}_{1:S}).\label{ind5}
\end{align}
  \begin{align}
    &\forall n \in [1:S], \ 
    \sum_{n\in \mathcal{T}} H(\mathcal{X}_n^{(\pi[m])} | \mathcal{X}_{\mathcal{T}\setminus \{n\}}^{(\pi[m-1])},\mathcal{W}_{\pi[1:m-1]},\mathbf{Q}_{1:S})
    = H(\mathcal{X}_{\mathcal{T}}^{(\pi[m])}| \mathcal{W}_{\pi[1:m-1]},\mathbf{Q}_{1:S}).  \label{ind6}
\end{align}
  \begin{align}
    &H(\mathcal{X}_{\mathcal{T}}^{(\pi[m])} | \mathcal{W}_{\pi[1:m]},\mathbf{Q}_{1:S}) 
    = H(\mathcal{X}_{\mathcal{T}}^{(\pi[m])}, \mathcal{X}_{\bar{\mathcal{T}}}^{(\pi[m])}| \mathcal{W}_{\pi[1:m]},\mathbf{Q}_{1:S}). \label{dep} 
\end{align}
  \begin{align}
    &H(\mathcal{X}_{\bar{\mathcal{T}}}^{(\pi[\bar{m}])}|\mathcal{X}_{\mathcal{T}}^{(\pi[m])},\mathcal{X}_{\bar{\mathcal{T}}}^{(\pi[m])},\mathbf{Q}_{1:S}) 
    = H(\mathcal{X}_{\bar{\mathcal{T}}}^{(\pi[\bar{m}])}|\mathcal{X}_{\mathcal{T}}^{(\pi[m])},\mathcal{W}_{\pi[m]},\mathbf{Q}_{1:S}).\label{ind7}
\end{align}
Let $\bar{\mathcal{I}}$ be an arbitrary subset of $[1:M]\setminus \{m\}$, and let $\mathcal{I}$ be the complement of $\bar{\mathcal{I}}$,
then the following theorem holds:
\begin{theorem}[Conditions for Colluding Capacity Achievability] \label{conditionsforcolludingcapacity}
  From \eqref{han2}-\eqref{ind7},
  the necessary and sufficient conditions for capacity-achieving colluding PIR are expressed as follows:
  \begin{align}
    &\mathrm{rank}\ Q_{1:S}^{(m)}[\mathcal{I}]=\sum_{j=1}^{S} \mathrm{rank}\ Q_j^{(m)}[\mathcal{I}], \label{theorem2-1}
    \\&\mathrm{rank}\ Q_{1:S}^{(m)}[\bar{m}]=\max_{\mathcal{T}}\ \mathrm{rank}\ Q_{\mathcal{T}}^{(m)} [\bar{m}] . \label{theorem2-2}
    \end{align}
\end{theorem}
\begin{proof}
  When (\ref{ind5}) holds, the following equations also hold:
  \begin{align}
    &\forall i,j \in [1:S], i \neq j ,\ 
    I(\mathcal{X}_i^{(\pi[m])};\mathcal{X}_j^{(\pi[m])} | \mathcal{W}_{\pi[1:m-1]},\mathbf{Q}_{1:S})=0
    \\& \Leftrightarrow 
    \forall j \in [1:S], \forall m \in [1:M], \forall \bar{\mathcal{I}}\subset [1:M]\setminus\{m\}, \ 
    H(\mathcal{X}_{j}^{(m)}| \mathcal{Q}_{1:S}^{(m)}, \mathcal{W}_{\bar{\mathcal{I}}},\mathcal{X}_{i}^{(m)}) = H(\mathcal{X}_{j}^{(m)}| \mathcal{Q}_{1:S}^{(m)}, \mathcal{W}_{\bar{\mathcal{I}}}) \label{pi}. 
  \end{align}
  Due to (\ref{pi}), (\ref{han2})-(\ref{ind6}) are all satisfied. 
  Next, for (\ref{dep}), the following equations hold:
  \begin{align}
      &H(\mathcal{X}_{\mathcal{T}}^{(\pi[m])} | \mathcal{W}_{\pi[1:m]},\mathbf{Q}_{1:S})
      = H(\mathcal{X}_{\mathcal{T}}^{(\pi[m])}, \mathcal{X}_{\bar{\mathcal{T}}}^{(\pi[m])}| \mathcal{W}_{\pi[1:m]},\mathbf{Q}_{1:S}) 
      \\& \Leftrightarrow 
     H(\mathcal{X}_{\mathcal{T}}^{(m)} | \mathcal{W}_{m},\mathcal{Q}_{1:S}^{(m)}) 
      = H(\mathcal{X}_{\mathcal{T}}^{(m)}, \mathcal{X}_{\bar{\mathcal{T}}}^{(m)}| \mathcal{W}_{m},\mathcal{Q}_{1:S}^{(m)}) \label{pi2}.
  \end{align}
  Equation (\ref{pi2}) is because we can set $\pi[1]=m$.
  Therefore, (\ref{ind7}) is satisfied as follows:
  \begin{align}
     &H(\mathcal{X}_{\bar{\mathcal{T}}}^{(\pi[\bar{m}])}|\mathcal{X}_{\mathcal{T}}^{(\pi[m])},\mathcal{X}_{\bar{\mathcal{T}}}^{(\pi[m])},\mathbf{Q}_{1:S}) 
  \\ & =H(\mathcal{X}_{\bar{\mathcal{T}}}^{(\pi[\bar{m}])}|\mathcal{X}_{\mathcal{T}}^{(\pi[m])},\mathcal{X}_{\bar{\mathcal{T}}}^{(\pi[m])},\mathcal{W}_{\pi[m]} ,\mathbf{Q}_{1:S}) \label{indmes}
  \\ & =H(\mathcal{X}_{\bar{\mathcal{T}}}^{(\pi[\bar{m}])},\mathcal{X}_{\mathcal{T}}^{(\pi[m])},\mathcal{X}_{\bar{\mathcal{T}}}^{(\pi[m])}|\mathcal{W}_{\pi[m]} ,\mathbf{Q}_{1:S})
   -H(\mathcal{X}_{\mathcal{T}}^{(\pi[m])}, \mathcal{X}_{\bar{\mathcal{T}}}^{(\pi[m])}| \mathcal{W}_{\pi[m]},\mathcal{Q}_{1:S}^{(\pi[m])})
  \\ & =H(\mathcal{X}_{\bar{\mathcal{T}}}^{(\pi[\bar{m}])},\mathcal{X}_{\mathcal{T}}^{(\pi[m])}|\mathcal{W}_{\pi[m]} ,\mathbf{Q}_{1:S})
   -H(\mathcal{X}_{\mathcal{T}}^{(\pi[m])}| \mathcal{W}_{\pi[m]},\mathcal{Q}_{1:S}^{(\pi[m])})
  \\ & =H(\mathcal{X}_{\bar{\mathcal{T}}}^{(\pi[\bar{m}])} | \mathcal{X}_{\mathcal{T}}^{(\pi[m])}, \mathcal{W}_{\pi[m]} ,\mathbf{Q}_{1:S}) \label{sum2}.
  \end{align}
Thus, (\ref{pi}) and (\ref{pi2}) are the summary of (\ref{han2})-(\ref{ind7}).
Let $Q_j^{(m)}=(Q_j^{(m)}[1],\dots,Q_j^{(m)}[M])$ for all $j \in [1:S]$, then 
for any $i,j\in [1:S]$, the responses can be calculated as follows:
 \begin{align}
   X_i^{(m)}=Q_i^{(m)} [\bar{\mathcal{I}}]W_{\bar{\mathcal{I}}} + Q_i^{(m)}[\mathcal{I}]W_\mathcal{I}, \\
   X_j^{(m)}=Q_j^{(m)} [\bar{\mathcal{I}}]W_{\bar{\mathcal{I}}} + Q_j^{(m)}[\mathcal{I}]W_\mathcal{I}.
  \end{align}
  Equation (\ref{pi}) indicates $X_j^{(m)}$ is not obtained by $X_i^{(m)}$ without knowing $W_\mathcal{I}$,
  so the row vectors of $Q_i^{(m)}[\mathcal{I}]$ and $Q_j^{(m)}[\mathcal{I}]$ are linearly independent or either is a zero matrix,
and the following equation holds:
\begin{align}
 \mathrm{rank}\ Q_{1:S}^{(m)}[\mathcal{I}] =\sum_{j=1}^{S} \mathrm{rank}\ Q_j^{(m)}[\mathcal{I}].
\end{align}
   Furthermore, 
   for any $\bar{m} = [1:M]\setminus \{m\}$, the responses from the servers can also be calculated as follows:
   \begin{align}
    X_\mathcal{T}^{(m)}=Q_\mathcal{T}^{(m)} [\bar{m}]W_{\bar{m}} + Q_\mathcal{T}^{(m)}[m]W_m, \label{xt}\\
    X_{1:S}^{(m)}=Q_{1:S}^{(m)} [\bar{m}]W_{\bar{m}} + Q_{1:S}^{(m)}[m]W_m.\label{x1s}
   \end{align} 
Equation (\ref{pi2}) indicates $X_{1:S}^{(m)}$ is obtained by $X_\mathcal{T}^{(m)}$ without knowing $W_{\bar{m}}$,
so (\ref{xt}) and (\ref{x1s}) show that $Q_{1:S}^{(m)}[\bar{m}]$ must be determined by $Q_{\mathcal{T}}^{(m)}[\bar{m}]$ for any $\mathcal{T}$.
Therefore, the following equation holds: 
   \begin{align}
      \mathrm{rank}\ Q_{1:S}^{(m)}[\bar{m}] = \max_{\mathcal{T}}\ \mathrm{rank}\ Q_\mathcal{T}^{(m)} [\bar{m}].
   \end{align}
  \end{proof}
These conditions indicate that $Q_{1:S}^{(m)}[m]$ must vary from server to server, 
and $Q_{\mathcal{T}}^{(m)}[\bar{m}]$ must be possible to cancel other rows out for any $\mathcal{T}$, so that $W_m$ is recovered with fewer queries.

  \subsection{Summary}
The conditions for robust and Byzantine PIR properties are summarized in Table \ref{robustterms},
where $j \in [1:S+\mu], \kappa \subset[1:S+\mu],\ |\kappa|=S$, $\bar{\mathcal{I}} \subset [1:M]\setminus \{m\}$.
  \begin{table}[tb]
    \begin{center}
\caption{Conditions for Robust and Byzantine PIR} \label{robustterms}
\scalebox{1.05}[1.0]{
  \renewcommand{\arraystretch}{1.2}
\begin{tabular}{|c|c|}
  \hline
      Properties & Conditions \\ \Hline{1.0pt}
Correctness & 
 $\exists D_m,\ D_m \begin{pmatrix}Q_1^{(m)}\\\vdots \\ Q_S^{(m)}\end{pmatrix} = 
\begin{pmatrix}
  \mathbf{O}_{(m-1)L_w \times L_w}\\
  \mathbf{E}\\
  \mathbf{O}_{(M-m)L_w \times L_w}
 \end{pmatrix}^{\top}$
\\ \hline
Privacy & $|\{f | \mathcal{Q}_{j}^{(m)} = q_{j}\}|=|\{f | \mathcal{Q}_{j}^{(m^\prime)} = q_{j}\}|$\\ \hline 
Capacity &
$\mathrm{rank}\ Q_{\kappa}^{(m)}[\mathcal{I}]=\sum_{j \in \kappa} \mathrm{rank}\ Q_j^{(m)}[\mathcal{I}]$,\\
&$\mathrm{rank}\ Q_{1:S+\mu}^{(m)}[\bar{m}]=\max_{j\in [1:S+\mu]}\ \mathrm{rank}\ Q_{j}^{(m)} [\bar{m}]$\\\Hline{1.0pt}
\end{tabular}
}
\end{center}
\end{table}
  The conditions for colluding PIR properties are summarized in Table \ref{colludingterms},
where $\mathcal{T}$ satisfies $\mathcal{T} \subset [1:S], |\mathcal{T}|=T$.
  \begin{table}[tb]
    \begin{center}
\caption{Conditions for Colluding PIR} \label{colludingterms}
\scalebox{1.05}[1.0]{
  \renewcommand{\arraystretch}{1.2}
\begin{tabular}{|c|c|}
  \hline
      Properties & Conditions \\ \Hline{1.0pt}
Correctness & 
 $\exists D_m,\ D_m \begin{pmatrix}Q_1^{(m)}\\\vdots \\ Q_S^{(m)}\end{pmatrix} = 
\begin{pmatrix}
  \mathbf{O}_{(m-1)L_w \times L_w}\\
  \mathbf{E}\\
  \mathbf{O}_{(M-m)L_w \times L_w}
 \end{pmatrix}^{\top}$
\\ \hline
Privacy & $|\{f | \mathcal{Q}_{\mathcal{T}}^{(m)} = q_{\mathcal{T}}\}|=|\{f | \mathcal{Q}_{\mathcal{T}}^{(m^\prime)} = q_{\mathcal{T}}\}|$\\ \hline 
Capacity &
$\mathrm{rank}\ Q_{1:S}^{(m)}[\mathcal{I}]=\sum_{j=1}^{S} \mathrm{rank}\ Q_j^{(m)}[\mathcal{I}]$,\\ 
&$\mathrm{rank}\ Q_{1:S}^{(m)}[\bar{m}]=\max_{\mathcal{T}}\ \mathrm{rank}\ Q_{\mathcal{T}}^{(m)} [\bar{m}]$\\\Hline{1.0pt}
\end{tabular}
}
\end{center}
\end{table}

\section{Confirmation}
\label{sec:confirmation}
This section applies the derived conditions to the conventional adversarial PIR schemes described in Section \ref{sec:representative} to confirm whether they achieve capacity.
\subsection{Sun's Colluding and Robust PIR schemes}
Sun's colluding PIR scheme \cite{8119895} is a capacity-achieving scheme.
As shown in (\ref{sundecode}), the query is constructed so that the decoding matrix $D_m$ exists, thereby satisfying the correctness condition.
In this method, $F$ is induced by the randomness of $\mathbf{s}_{1},\dots,\mathbf{s}_{M}$.
Lemma 1 in \cite{8119895} shows that MDS encoding does not affect the distribution of the feasible query set of $\frac{T}{S}$ within the entire matrix.
Therefore, $T$ colluding servers cannot infer the queries sent to the other servers.
This means that the number of feasible $f$, which is the realization of $F$, is constant regardless of the desired index $m$. 
Thus, $|\{f | \mathcal{Q}_{\mathcal{T}}^{(m)} = q_{\mathcal{T}}\}|=|\{f | \mathcal{Q}_{\mathcal{T}}^{(m^\prime)} = q_{\mathcal{T}}\}|$ holds for any $\mathcal{T} \subset [1:S]$ satisfying $|\mathcal{T}| = T$, 
and the privacy condition is satisfied.
Furthermore, $\mathbf{e}_m$ in (\ref{sunsquery}) is a full-rank matrix.
Therefore, $\mathrm{rank}\ Q_{1:S}[\mathcal{I}] = \sum_{j\in[1:S]} \mathrm{rank}\ Q_{j}[\mathcal{I}]$ for any $j\in[1:S]$ and for any $\bar{\mathcal{I}} \subset [1:S]\setminus \{m\}$.
For any $k\in[1:M]\setminus \{m\}$, $\mathbf{e}_k$ are MDS-encoded by $\mathcal{G}(S^M, TS^{M-1})$. 
This means that $Q_{\mathcal{T}}[\bar{m}]$ can represent $Q_{1:S}[\bar{m}]$ for any $\mathcal{T} \subset [1:S]$ satisfying $|\mathcal{T}| = T$.
Therefore, $\mathrm{rank}\ Q_{1:S}[\bar{m}] = \max_{\mathcal{T}}\ \mathrm{rank}\ Q_{\mathcal{T}}[\bar{m}]$ holds, and the capacity conditions are satisfied.

Sun's robust PIR scheme \cite{8119895} is also a capacity-achieving scheme.
Any $S$ responses from the $S+U$ servers are sufficient to decode the desired message, since $\mathbf{e}_k$ are MDS encoded for all $k \in [1:M]$.
Therefore, the decoding matrix $D_m$ exists, and the correctness condition is satisfied.
Furthermore, due to Lemma 1 in \cite{8119895}, the number of feasible $f$ is constant regardless of the desired index $m$, and $|\{f | \mathcal{Q}_{j}^{(m)} = q_{j}\}|=|\{f | \mathcal{Q}_{j}^{(m^\prime)} = q_{j}\}|$ holds for any $j \in [1:S+\mu]$.
Thus, the privacy condition is satisfied.
Since $\mathcal{C}$ is an $((S+U)S^{M-1}, S^{M})$ MDS code and $\mathbf{s}_m$ is a full-rank matrix, any $S^{M}$ rows of $\mathbf{e}_m$ construct a full-rank matrix.
Therefore, $\mathrm{rank}\ Q_{1:S}[\mathcal{I}] = \sum_{j\in \kappa} \mathrm{rank}\ Q_{j}[\mathcal{I}]$ for any $\kappa \subset[1:S+\mu],\ |\kappa|=S$.
For any $k\in[1:M]\setminus \{m\}$, $\mathbf{e}_k$ are MDS-encoded by $\mathcal{G}((S+U)S^{M-1}, S^{M-1})$ in the non-colluding robust PIR. 
This means that $Q_{j}[\bar{m}]$ can represent $Q_{1:S+U}[\bar{m}]$ for any $j \in [1:S+U]$.
Therefore, $\mathrm{rank}\ Q_{1:S+U}[\bar{m}] = \max_{j\in[1:S+U]}\ \mathrm{rank}\ Q_{j}[\bar{m}]$ holds, and the capacity conditions are satisfied.

\subsection{Wang's Colluding and robust PIR}
Wang's colluding PIR scheme \cite{8262858} is not a capacity-achieving scheme.
As shown in (\ref{wangdecode}), the decoding matrix $D_m$ exists, and the correctness condition is satisfied.
In this method, $F$ is induced by the randomness of $\mathbf{r}[1],\dots,\mathbf{r}[M]$.
Any column vector of $\mathcal{M}$ in (\ref{wangquery}) has $T$ random elements,
and only $Q_{j}^{(m)}[m]=\begin{pmatrix}
  1 & z_j & z_j^2 & \dots & z_j^{S-1} 
\end{pmatrix} 
\mathcal{M}[m]$ has a non-random element in addition.
However, by collecting only $T$ queries, $Q_{\mathcal{T}}^{(m)}[m]$ has the same realizations as $Q_{\mathcal{T}}^{(m)}[k]$ for any $k \in [1:M]\setminus \{m\}$, because of the $T$ random elements.
Therefore, $|\{f | \mathcal{Q}_{\mathcal{T}}^{(m)} = Q_{\mathcal{T}}^{(m)}\}|=|\{f | \mathcal{Q}_{\mathcal{T}}^{(m^\prime)} = Q_{\mathcal{T}}^{(m)}\}|$ holds for any $\mathcal{T} \subset [1:S]$ satisfying $|\mathcal{T}|=T$, 
and the privacy condition is satisfied.
Since $Q_{1:S}^{(m)}[m]$ is an $S \times (S-T)$ matrix, $\mathrm{rank}\ Q_{1:S}^{(m)}[m] = S-T$. 
On the other hand, at least $S-1$ rows of $Q_{1:S}^{(m)}[m]$ are non-zero vectors.
Only when $z_{i}=0$ for one of $i\in[1:S]$, and the first row vector of $\mathcal{M}[m]$ is a zero vector, the $i$-th row vector of $Q_{1:S}^{(m)}[m]$ becomes a zero vector.
Therefore, $\sum_{j\in[1:S]} \mathrm{rank}\ Q_{j}^{(m)}[m] \geq S-1$.
Thus, $\mathrm{rank}\ Q_{1:S}^{(m)}[m] \neq \sum_{j\in[1:S]} \mathrm{rank}\ Q_{j}^{(m)}[m]$ and the conditions for capacity achievability are not satisfied.

As a fact, in this method, $S$ responses are downloaded to obtain $S-T$ segmented parts of the desired message.
However, if $z_{i}=0$ for one of $i\in[1:S]$, and the first row vector of $\mathcal{M}$ becomes a zero vector with probability $\frac{1}{S^{M(S-T)}}$, 
then 
a single query $Q^{(m)}_{i}$ also becomes a zero vector, 
and the user knows $X^{(m)}_{i}$ is $0$ without receiving it.
Therefore, $i$-th server does not need to return the response, and the download rate is obtained as follows: 
\begin{equation}
  \mathcal{R} = \frac{S-T}{S-1+(1-\frac{1}{S^{M(S-T)}})}=\frac{1-\frac{T}{S}}{1-\frac{1}{S^{M(S-T)+1}}}.
\end{equation}
This rate is lower than the colluding capacity.
In the case of robust PIR, the same discussion applies for any $S$ out of $S+U$ queries;
the capacity condition is not satisfied, while the correctness and privacy conditions are satisfied.

\section{Conclusion}
In this paper, we derived the necessary and sufficient conditions for privacy, correctness, 
and capacity achievability of colluding, robust, and Byzantine PIR. 
We represented the conditions for these PIR schemes in terms of query matrix properties.
Furthermore, we evaluated the conventional adversarial PIR schemes and confirmed whether they satisfy
the necessary and sufficient conditions for PIR.
The methods known as capacity-achieving PIR have been confirmed to indeed satisfy the conditions.
On the other hand, some methods that do not achieve capacity have been confirmed to not satisfy the conditions.

\bibliography{bib} 
\bibliographystyle{junsrt} 

\end{document}